\theoremstyle{plain}
\newtheorem{theorem}{Theorem}
\newtheorem{corollary}[theorem]{Corollary}
\newtheorem{assumption}[theorem]{Assumption}
\theoremstyle{definition}   
\newtheorem{definition}[theorem]{Definition}
\newtheorem{example}[theorem]{Example}
\newcommand{\TP}{\textbf{TP}}
\newcommand{\FP}{\textbf{FP}}
\newcommand{\FN}{\textbf{FN}}
\newcommand{\pre}{\text{prec}}
\newcommand{\rec}{\text{rec}}
\title{A graph theoretic linkage attack on\\microdata in a metric space}
\author{Martin Kroll\\ \href{mailto:martin.kroll@uni-due.de}{\texttt{martin.kroll@uni-due.de}}}
\date{February 13, 2014}
\begin{document}

\definecolor{DodgerBlue4}{RGB}{016,078,139}

\addtokomafont{title}{\color{DodgerBlue4}}
\addtokomafont{section}{\color{DodgerBlue4}}

\pagestyle{scrheadings}

\maketitle

\begin{abstract}
Certain methods of analysis require the knowledge of the spatial distances between entities whose data are stored in a microdata table. For instance, such knowledge is necessary and sufficient to perform data mining tasks such as nearest neighbour searches or clustering. However, when inter-record distances are published in addition to the microdata for research purposes, the risk of identity disclosure has to be taken into consideration again. In order to tackle this problem, we introduce a flexible graph model for microdata in a metric space and propose a linkage attack based on realistic assumptions of a data snooper's background knowledge. This attack is based on the idea of finding a maximum approximate common subgraph of two vertex-labelled and edge-weighted graphs. By adapting a standard argument from algorithmic graph theory to our setup, this task is transformed to the maximum clique detection problem in a corresponding product graph. Using a toy example and experimental results on simulated data show that publishing even approximate distances could increase the risk of identity disclosure unreasonably.
\end{abstract}

\textit{Keywords}: Anonymity, identity disclosure, linkage attack, maximum approximate common subgraph problem

\section{Introduction}\label{sec:introduction}

Enriching microdata with spatial information opens up numerous additional approaches for analysis. In the area of epidemiology, this insight goes back at least to the middle of the 19th century when John Snow identified a contaminated water pump in London as the source of a cholera outbreak by linking the cases of mortality to their location and visualising these locations and the positions of surrounding water pumps on a map~\cite{snow}.

\smallskip

In recent years, spatial analysis techniques have become increasingly attractive in the social sciences as well~\cite{parker}. However, when personal microdata containing sensitive information (e.g., gathered in a survey or health study) are published for research purposes, the anonymity of the individuals has to be guaranteed.
It has been pointed out in~\cite{elemam} that
\textit{location is often one of the critical pieces of information for a successful re-identification attack}.
Therefore, usually only microdata that contain spatial information in an aggregated form are released, which restricts the choice of applicable techniques for analysis drastically.

\smallskip

In particular, distance calculations that are based on aggregated data become difficult and imprecise~\cite{beyer}, especially for entities that are closely related. Since many data mining techniques and methods in spatial analysis require accurate distance computations, it is necessary to investigate the extent to which additionally published (approximate) inter-record distances influence the risk of identity disclosure and how a possible non-acceptable increase of this risk can be prevented. Our work presented in this article provides a novel approach for tackling these questions.

\subsection*{Contributions of the paper}

We introduce a flexible natural graph model for microdata with known inter-record distances. The search for a maximum common subgraph between two such graph models is interpreted as a novel kind of linkage attack on such microdata. We discuss the relative merits of our method in comparison to the usual linkage attacks on the basis of a small-scale example (example~\ref{example:poets} in section~\ref{sec:linkage_attack}).

\smallskip

Furthermore, in the special case of geographical distances, it is shown that, on the basis of simulated data, a non-negligible risk of identity disclosure exists if $\mathcal N(0,\sigma^2)$-distributed Gaussian noise is added to the input coordinates for too small values of $\sigma$. For larger values of $\sigma$ (which lead to sufficiently anonymised data), however, the data become nearly useless for further analysis. These results reflect a trade-off between data utility and disclosure risk through the proposed attack.

\subsection*{Organisation of the paper} In section \ref{sec:background}, we refer to related work. The preparatory work is given in section \ref{sec:graph_model} as well as a graph model for microdata in a metric space which forms the basis of the graph theoretic linkage attack introduced in section \ref{sec:linkage_attack}. In section \ref{sec:experimental_results}, this attack is evaluated by means of a simulation study. We conclude and discuss the possible directions for future research in section \ref{sec:conclusion}.

\section{Related work}\label{sec:background}

\subsection*{Statistical disclosure control and privacy preserving data mining}

As already indicated in the introductory section above, the original motivation for the work presented in this article stems back to the wish to also make the wide variety of distance-based methods (e.g. from spatial statistics) applicable for microdata that are published for scientific purposes.
Since it is intuitively compelling that naive release of the exact distances between individuals can increase the risk of deanonymisation, the interest question, however, is how might the knowledge of approximate distances change the risk of identity disclosure, i.e. the chance of a data snooper attempting to identify some of the entities.

\smallskip

In general, the analysis of such deanonymisation attacks on microdata and the development of tools for their anonymisation is a central topic of \textit{statistical disclosure control}~\cite{hundepool}. It is universally acknowledged that a necessary but insufficient first step during the proc\-ess of anonymisation consists in the removal of all attributes that can be used to identify an individual entity unambiguously (this step is usually referred to as \textit{deidentification}). Such attributes (e.g., \texttt{social insurance number}) are called \textit{(direct) identifiers}, in contrast to \textit{quasi-identifiers}, which do not have the power to nullify an individual's anonymity on their own, a distinction which has to be ascribed to Dalenius~\cite{dalenius}.

\smallskip

By using a combination of quasi-identifiers, however, it might be possible to assign an entity from the underlying population to a specific record of a published microdata file unambigously. For example, in~\cite{sweeney2000} it was shown that based on 1990 US census data, 87\% of the population of the United States are uniquely determined by their values with respect to the quasi-identifier set \{\texttt{5-digit ZIP code}, \texttt{gender}, \texttt{date of birth}\}. This fact motivates a mode of attack that is commonly referred to as \textit{linkage attack}~\cite{duncan}: In this scenario, it is assumed that a data snooper has access to an external auxiliary microdata file (called \textit{identification file}) containing both direct identifiers and quasi-identifiers as attributes. By making use of the quasi-identifiers, the snooper attempts to identify entities by linking records from the identification file to records from the published microdata file (termed \textit{target file}). A real-life example of linkage via quasi-identifiers is due to Sweeney~\cite{sweeney2002}: She was able to detect the record corresponding to the governor of Massachusetts in a published health data file by linkage with a publicly obtainable voter registration list.

\smallskip

Even though theoretical results on linkage attacks were recently obtained in~\cite{merener}, the concept of $k$-anonymity had already been proposed as a remedy against linkage attacks in~\cite{samarati}. The basic idea of $k$-anonymity is to modify the records in the released microdata such that every record coincides with at least $k-1$ other records with respect to the quasi-identifiers. For this reason, an unambiguous linkage between the identification and target file will not be possible. The graph theoretic linkage attack introduced in section \ref{sec:linkage_attack} contains the classical linkage attack via quasi-identifiers as a subroutine, however, it provides a way to resolve at least some of the ambiguous matches.

\smallskip

Several papers on \textit{privacy preserving data mining} have already discussed privacy issues with respect to the distance-preserving transformations of microdata. However, in these articles it is generally assumed that the considered distances can be directly calculated from the microdata, whereas our focus is on microdata enriched with supplementary distances between the entities that cannot be calculated from the microdata itself. Moreover, in most cases only specific kinds of distances have been considered (e.g., $\ell_1$-distance in~\cite{rane} or the Euclidean (i.e. $\ell_2$-) distance in~\cite{liub}).

\smallskip

In contrast, the attack proposed in this paper can be applied to any kind of distance function (notwithstanding that the special case of spatial distances motivated our research and is exclusively referred to in our examples). Furthermore, a distance-preserving technique for the anonymisation of binary vectors is discussed in~\cite{kenthapadi}. In contrast to our approach, in that article the distance information alone is not assumed to increase the risk of identity disclosure.

\subsection*{Location privacy and geographical masks}

There is a vast literature on the problem of identity disclosure when dealing with spatially referenced data. The opportunities and challenges with regard to spatial data in the context of social sciences are discussed in great detail in~\cite{gutmann2007} and~\cite{gutmann2008}.

\smallskip

Articles~\cite{brownstein} and~\cite{curtis} give illustrative examples of how naive publishing of spatially referenced data can lead to a violation of anonymity: In both cases, the respective authors were able to reconstruct many of the original addresses successfully from published low resolution maps. A currently flourishing branch of research deals with anonymisation techniques for datasets containing mobility traces of individuals~\cite{gambs} (e.g., obtained via mobile phone tracking). This topic is usually referred to as \textit{location privacy}~\cite{krumm}.

\smallskip

In this article, however, we consider the deanonymisation risk that arises from the knowledge of the (approximate) distances between fixed spatial points assigned to the entities in a microdata table.
Various methods for the anonymisation of geographic point data (not necessarily taking additional covariates into consideration as in our case) have been discussed under the term of \textit{geographical masks}. \cite{armstrong} and~\cite{okeefe} provide comprehensive outlines of the existing methods.

\smallskip

A noteworthy method is due to Wieland et al.~\cite{wieland}, who developed a method based on linear programming that moves each point in the dataset as little as possible under a given quantitative risk of re-identification. However, the aim of nearly all proposed anonymisation techniques for spatially referenced data consists in distorting the spatial distribution with respect to the underlying geographical area as little as possible, whereas attempts predominantly focusing on the preservation of distances have not yet been discussed in the context of spatial data. It appears to be obvious that neglecting the underlying geographical area might yield more accurate results regarding distance calculations. 

\subsection*{Social network anonymisation}

The use of a graph model in this article might suggest a strong connection between our approach and the methods discussed in the area of \textit{social network anonymisation}~\cite{zheleva}. However, we model the microdata with known inter-record distances using a complete graph with vertex labels and edge weights, which is a very specific model in contrast to the more general graph models commonly used in social network analysis.

\smallskip

Indeed, the graphs modelling social networks are usually a long way off from being complete and their edges are not usually weighted. For example, in~\cite{chester} the underlying graph model considers discrete edge labels instead of real valued weights only.

\smallskip

Furthermore, active attacks (consisting in the addition of nodes to the published network by an intruder) as in~\cite{backstrom} do not seem to be sensible when investigating the risk of identity disclosure for published microdata. However, the active attack proposed in~\cite{backstrom} is related to the one in this paper because it also makes use of graph algorithmic building blocks. It consists in the detection of a subgraph in a larger graph, whereas the attack in this paper is based on finding the common subgraphs of two different graphs.

\subsection*{Pattern recognition} 
To the best of our knowledge, this paper is the first one to make use of a graph model for a microdata file and the distances between its records. Finding a matching between two such graph models constitutes the basic principle of the graph theoretic linkage attack proposed in this article and is an often considered problem in the \textit{pattern recognition} field and its various areas of application (see~\cite{conte} as a source providing an extensive outline).

\smallskip

Fundamental to our presentation is the article by Levi~\cite{levi}, which motivates to transform the problem of finding the (maximum) common subgraphs of two graphs into a (maximum) clique detection problem, and its adaption in~\cite{fober} where the original approach by Levi has been relaxed in order to deal with approximate common subgraphs as well. This transformation to the maximum clique detection problem is of particular interest due to its various fields of application (e.g. biochemistry~\cite{fober}). The problem of finding a maximum clique in a graph is known to be NP-hard~\cite{garey} and a great deal of attention has been paid to the development of techniques for solving this problem either exactly or at least approximately~\cite{bomze}.
For the simulation study in section~\ref{sec:experimental_results} of this paper, we made use of the maximum clique detection algorithm introduced by Konc and Jane\v{z}i\v{c} in~\cite{konc}. Exploring the limits of our approach in view of its scalability towards very large files is postponed to future research.

\section{A graph model for microdata in a metric space}\label{sec:graph_model}

\subsection*{Preliminaries}
A metric space is a pair $(X,d)$, where $X$ is a set and $d$ is a (distance) function $d: X \times X \to \mathbb R$ satisfying the following three conditions: (i) $d(x,x)=0$ and $d(x,y)>0$ whenever $x \neq y$, (ii) $d(x,y)=d(y,x)$ and (iii) $d(x,y)\leq d(x,z)+ d(z,y)$.

We assume that the deduplicated microdata table $T$ at hand contains information with respect to an attribute set $\mathcal A:=\{A_1,\ldots,A_m\}$ about $N_T \in \mathbb N$ entities from an underlying population. The fact that the distances between the entities of $T$ are known can be modelled in mathematical terms by means of a function $\tau: [N_T]:=\{1,\ldots,N_T\} \to (X,d)$, $i \mapsto \tau(i)$ which maps the $i$th record/entity of $T$ to a point $\tau(i)$ in a metric space $X$ such that the distance between records $i$ and $j$ of $T$ is equal to $d_{ij}:=d(\tau(i),\tau(j))$. The distances between all the entities can then be stored in the $N \times N$ distance matrix $D=(d_{ij})$. Such a pair $(T,D)$ is hereafter referred to as \textit{microdata in a metric space}.

Note that we did not state any assumptions on the function $\tau$ such as injectivity or surjectivity. It is easy to see that $[N_T]$ itself becomes a metric space by the pullback of $d$ via $\tau$ if and only if $\tau$ is injective (see page 81 in~\cite{deza}). In general, $[N_T]$ becomes a pseudometric space only. From our point of view, this flexibility regarding $\tau$ is intended as the records of a microdata table often only form a pseudometric instead of a metric space, which is illustrated by the following example:
Consider microdata about individuals which have been gathered in a scientific survey. If two respondents share a common residence, the geographical distance between these respondents will be equal to zero and thus the set of respondents with the related distances between them forms a pseudometric space only. Thus, the distance matrix $D$ is not assumed to be a proper distance matrix, i.e. zeroes outside the diagonal are permitted.

\subsection*{Some terms from graph theory} Given a set $S$, we denote the set of its two-element subsets by $[S]^2$. A \textit{(simple undirected) graph} $\mathcal G=(V,E)$ consists of a set $V$ (whose elements are termed \textit{vertices}) and a set $E\subseteq [V]^2$ of \textit{edges}. 
The cardinality $|V|$ of $V$ is called the order of $\mathcal G$.
Two distinct vertices $v$ and $w$ of $V$ are \textit{adjacent} if $\{v,w\} \in E$. The existence of an edge between $v$ and $w$ will sometimes be denoted by $vw \in E$ as a shorthand.
A graph is called \textit{complete} if any two of its vertices are adjacent.
A graph $\mathcal G'=(V',E')$ with $V' \subseteq V$ and $E'\subseteq [V']^2 \cap E$ is a \textit{subgraph} of $\mathcal G=(V,E)$. If $E' =[V']^2 \cap E$ holds, the graph $\mathcal G'$ is called an \textit{induced subgraph} of $\mathcal G$ or we say that the subset $V'$ of vertices induces $\mathcal G'$ in $\mathcal G$ which is denoted by $\mathcal G'=\mathcal G[V']$. A subset of the vertex set $V$ is a \textit{clique} if the subgraph induced by these vertices is complete. A clique containing $k$ elements is termed a $k$\textit{-clique}. A clique is \textit{maximal} if it is not contained in a larger clique. A clique is \textit{maximum} if there is no other clique containing more vertices. Clearly, a maximum clique is always maximal, but generally not vice versa. 
\par\smallskip
The notion of a vertex-labelled and edge-weighted graph is of fundamental importance to the graph model for microdata in a metric space introduced below. This notion is just a special case of the more general notion of an \textit{attributed graph} which is frequently used in the pattern recognition community~\cite{bunke_riesen}. 

\begin{definition}\label{def:graph}
Let $\mathcal L_V$ be a set of vertex labels.
A \textit{vertex-labelled} and \textit{edge-weighted graph} is a four-tuple $\mathcal G=(V,E,\lambda,\omega)$, where $V$ is the vertex set, $E \subseteq [V]^2$ the edge set, $\lambda: V \to \mathcal L_V$ the vertex-labelling function and $\omega: E \to \mathbb R$ a weight function which assigns real numbers to the edges.
\end{definition}

\subsection*{The graph model}
Let $(T,D)$ be microdata in a metric space and $N_T$ the number of records in $T$ as above. An associated vertex-labelled and edge-weighted graph $\mathcal G=\mathcal G(T,D)=(V,E,\lambda,\omega)$ can be defined as follows: Set $V=\{1,\ldots,N_T\}$, $E=[V]^2$ and define $\omega_E: E \to \mathbb R$ via $\omega_E(ij)=d_{ij}:=d(\tau(i),\tau(j))$; the labelling function $\lambda_V: V \to \mathcal L_V$ assigns a certain part of the information stored in $T$ for a record to the corresponding vertex of the graph $\mathcal G$ (see example~\ref{example:graph_model} below).
Note that the simple undirected graph $\mathcal G_{\text{simple}}:=(V,E)$ obtained from $\mathcal G$ by forgetting vertex labels and edge weights is the complete graph $K_{N_T}$ with $N_T$ vertices. This graph theoretical structure appears adequate for modelling microdata in a metric space: Loops, i.e. edges linking a vertex with itself, are not necessary because $d_{ii}=0$ for any vertex $i \in V$ and undirected edges are sufficient for reflecting the distance from the corresponding edge weights due to the symmetry $d_{ij}=d_{ji}$ of the distance matrix $D=(d_{ij})$. Obviously, it would be easy to widen this model, e.g. by introducing directed edges, if this were necessary for a specific application.

\begin{example}\label{example:graph_model}
Consider the imaginary microdata provided by table~\ref{table:example_microdata} containing personal microdata with respect to the attributes \texttt{name}, \texttt{sex}, \texttt{birth location} and \texttt{year of birth}. The function $\tau$ maps each individual to the geographic coordinates (longitude $\lambda$ and latitude $\theta$ in degrees) of the correspoding birth location with respect to the World Geographic System WGS 84, i.e.
\begin{align*}
\tau(1)&=(-0.1198244,51.51121) \quad \text{(Alice was born in London)}\\
\tau(2)&=(2.3522219,48.85661) \quad \text{(Bob was born in Paris)}\\
\tau(3)&=(-3.7037902,40.41678) \quad \text{(Eve was born in Madrid)}\\
\tau(4)&=(13.4049540,52.52001) \quad \text{(Walter was born in Berlin)}
\end{align*}

Assuming a spherical shape with radius $R=6371\text{ km}$ for the earth and converting degrees to radians, the geographical distance $d$ between two locations $(\lambda_1,\theta_1)$, $(\lambda_2,\theta_2)$ can be calculated as $d=R \cdot \phi$ where
$$\cos \phi = \sin \theta_1 \sin \theta_2 + \cos \theta_1 \cos \theta_2 \cos(\lambda_1-\lambda_2).$$
Using this formula leads to the following distance matrix $D$:

$$D=(d_{ij})=\begin{pmatrix}
0 & 343.6 & 1264.0 & 930.9\\ 
343.6 & 0 & 1052.9 & 877.5\\ 
1264.0 & 1052.9 & 0 & 1869.1\\
930.9 & 877.5 & 1869.1 & 0
\end{pmatrix}.$$

The corresponding graph model is then given by $V=\{1,2,3,4\}$, $E=[V]^2$ and the edge weights are defined via $\omega(ij)=d_{ij}=d_{ji}$. We define the vertex labelling function by assigning the information regarding the attributes \texttt{sex} and \texttt{year of birth} to each vertex, i.e. formally, we have $\lambda_V: V \to \text{dom}(\texttt{sex}) \times \text{dom}(\texttt{yob}) $.

The resulting vertex-labelled and edge-weighted graph can be visualised as in figure \ref{fig:example_model}.

\begin{minipage}{\textwidth}
  \begin{minipage}[b]{0.48\textwidth}
    \centering
    \begin{tabular}{c|c|c|c}
    name & sex & birth location & year of birth\\ 
    \hline
    Alice & f & London & 1978\\ 
    Bob & m & Paris & 1965\\ 
    Eve & f & Madrid & 1943\\
    Walter & m & Berlin & 1931\\
    \end{tabular}
    \vspace{2cm}
    \captionof{table}{Example microdata table. The table contains the attributes \texttt{name}, \texttt{sex}, \texttt{birth location} and \texttt{year of birth}.}\label{table:example_microdata}
    \end{minipage}
\begin{minipage}[b]{0.03\textwidth}
\phantom{x}
\end{minipage}
      \begin{minipage}[b]{0.48\textwidth}
        \centering
        \includegraphics[width=0.8\textwidth]{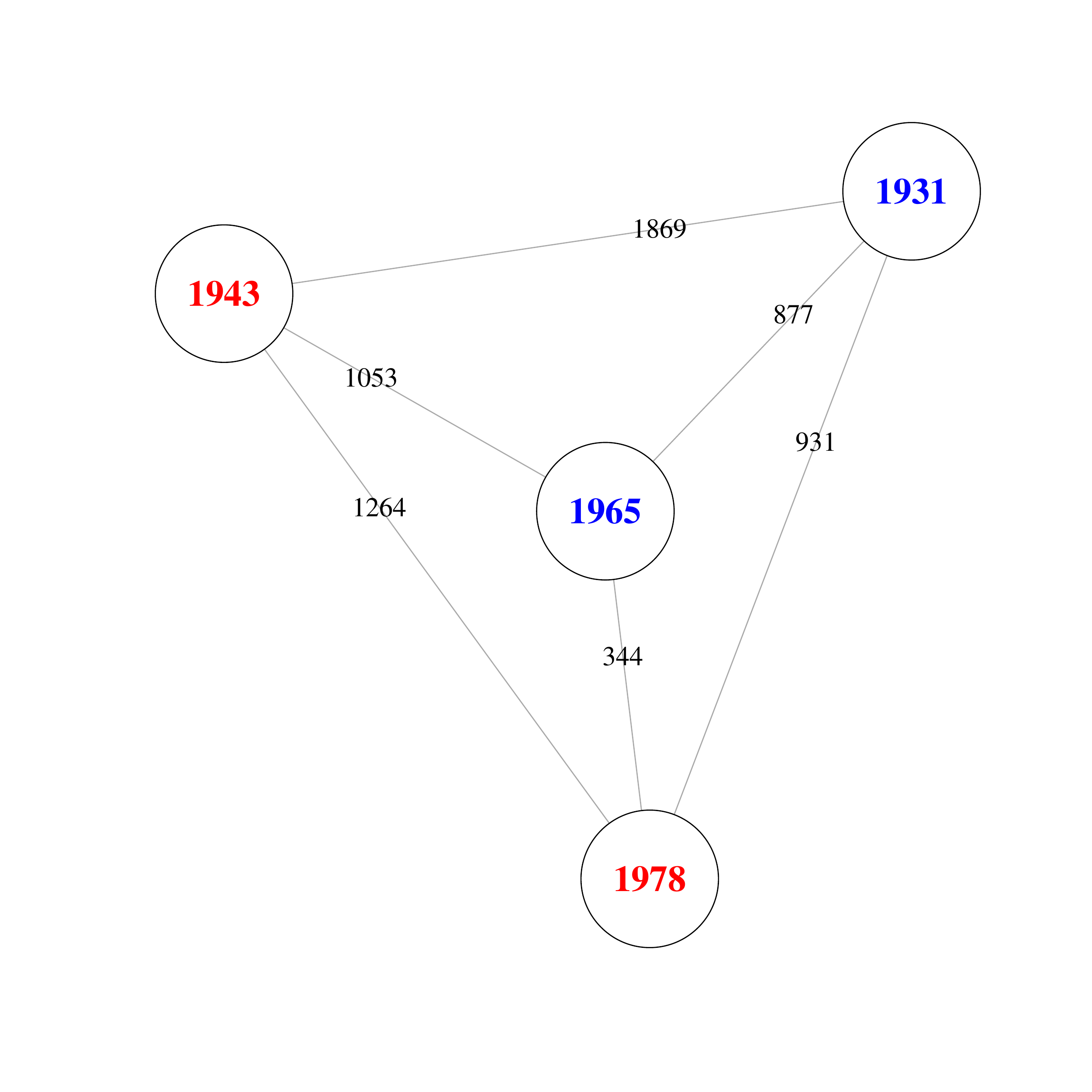}
        \captionof{figure}{The graph model for the example microdata. The attribute \texttt{sex} is indicated by the colour of the vertex labels.}\label{fig:example_model}
      \end{minipage}
  \end{minipage}
\end{example}

\section{A graph theoretic linkage attack}\label{sec:linkage_attack}

\subsection*{Prerequisites for the attack} 

In order to make any kind of linkage attack with the objective of identity disclosure, we have to at least presuppose that an appropriate external microdata file is available to the data snooper.

\begin{assumption}\label{assumption:identification_file}
The snooper is in possession of an identification file containing direct identifiers.
\end{assumption}

Under this assumption, classical linkage attacks based on comparisons considering the quasi-identifiers of the identification and target file can be conducted.
As already mentioned in section \ref{sec:background}, in the literature on the deanonymisation of microdata, these represent an important mode of attack aimed at identity disclosure. In order to perform a linkage attack that goes beyond the ordinary ones described above by also taking the information given by the pairwise distances between the records into consideration, we have to expand the setup by a second assumption.

\begin{assumption}\label{assumption:compute_distances}
The snooper is able to calculate the distances between the entities in the identification file at least approximately.
\end{assumption}

Although in some cases assumption \ref{assumption:compute_distances} might not be fulfilled, it is easy to find examples of when this would indeed be the case. For instance, when the target file containing survey data is enriched by the geographic distances between the respondents' residences, we assume that the snooper can geocode the addresses of the individuals in the identification file and calculate the corresponding distance matrix. In this example, there will be some dependence on the methods used for geocoding and distance calculation, a fact which has to be considered in the creation of an attack mode. Analogously, any modification of the distances in the target file to be carried out by the data holder for the purpose of anonymisation will have to be taken into consideration.

\subsection*{Approximate common subgraphs}
Due to assumptions \ref{assumption:identification_file} and \ref{assumption:compute_distances}, a data snooper can create a vertex-labelled and edge-weighted graph as defined in section \ref{sec:graph_model} for both the target and identification file. At this step, the snooper will only consider the common quasi-identifiers of both files for the definition of the vertex labels because a comparison of records can only be based on such attributes. Hereafter, the resulting graphs will be referred to as the \textit{target} and \textit{identification graph}.

\smallskip

Hence, classical linkage attacks consist in trying to find vertices in the target graph for each vertex in the identification graph that result in matches for the accompanying vertex labels. In the parlance of graph theory, this approach is equivalent to the search for \textit{common subgraphs} of order $1$, a notion which will be made precise below. This course of action will usually (e.g., if the target file satisfies $k$-anonymity for some $k>1$) lead to ties, that cannot be broken without extra information.

\smallskip

However, due to the additional information given by the edge weights in the graph model, the snooper is able to search for complete common subgraphs of order $>1$, which forms the essence of our attack.
It is intuitively apparent that taking edge weights into consideration increases a snooper's chances of evaluating the credibility of potential matches. 
For instance, if we consider vertices $v_1,v_2$ in the target graph $\mathcal G_1=(V,E,\lambda_V,\omega_E)$ and $w_1,w_2$ in the identification graph $\mathcal G_2=(W,F,\lambda_W,\omega_F)$ such that $\lambda_V(v_1)=\lambda_W(w_1)$ and $\lambda_V(v_2)=\lambda_W(w_2)$, we observe coincidence regarding the vertex labels.
If the corresponding edge weights $\omega_E(v_1v_2)$ and $\omega_F(w_1w_2)$ are at least approximately equal (denoted by $\omega_E(v_1v_2)\approx \omega_F(w_1w_2)$), this fact will augment the credibility of the two matches $(v_1,w_1)$ and $(v_2,w_2)$. Conversely, a large distortion with respect to the corresponding edge weights will reduce this credibility: In this case, at least one of the considered matches should be false. These considerations can easily be generalised to more than two matches and all accompanying edge weights. The more potential matches preserve all the accompanying edge weights, the more the credibility of all these potential matches will increase. This motivates the snooper to identify nearly identical substructures in both graphs which are as large as possible.

\smallskip

As indicated above, it seems convenient to allow some deviation with respect to the edge weights in this context due to deviations which cannot be circumvented by a snooper (as mentioned in the special case of geographic distances above). All of these considerations can be dealt with rigorously using the notion of an \textit{approximate common subgraph} of two vertex-labelled and edge-weighted graphs. This notion is made precise by means of the following definition:

\begin{definition}\label{def:common_subgraph}
Let $\mathcal G_1=(V,E,\lambda_
V,\omega_E)$ and $\mathcal G_2=(W,F,\lambda_W,\omega_F)$ be two vertex-labelled and edge-weighted graphs in the sense of definition~\ref{def:graph}. An \textit{approximate common subgraph} of $\mathcal G_1$ and $\mathcal G_2$ is given by subsets $S \subseteq V$, $T \subseteq W$ and a bijection $\varphi: S \to T$ such that the following two statements are true:
\begin{enumerate}[(i)]
\item $\lambda_V(s)=\lambda_W(\varphi(s))$ for all $s \in S$.
\item For all $s_1,s_2 \in S$ we have either
\vspace{-0.25em}
\begin{enumerate}[(a)]
\item $s_1s_2 \in E$, $\varphi(s_1)\varphi(s_2)\in F$ and $\omega_E(s_1s_2) \approx \omega_F(\varphi(s_1)\varphi(s_2))$, \quad or
\item $s_1s_2 \notin E$ and $\varphi(s_1)\varphi(s_2) \notin F$.
\end{enumerate}
\end{enumerate}
\end{definition}

Condition (ii) of definition~\ref{def:common_subgraph} guarantees that vertices $v_1,v_2 \in V$ can only be mapped to vertices $w_1, w_2 \in W$ if either both pairs of vertices are adjacent or non-adjacent (the non-adjacency even yields a sufficient condition for making such a mapping possible). Since we only consider complete graphs in this article, only requirement (a) in condition (ii) has to be checked because requirement (b) will never be fulfilled.
Furthermore, the interpretation of $\approx$ in definition~\ref{def:common_subgraph} has to be made precise depending on the prevailing situation and especially on the possible perturbations of the distances caused by the data holder before publishing the microdata. This issue will be dealt with in detail in example~\ref{example:poets} in this section and the simulation study in section~\ref{sec:experimental_results}. It would have certainly been possible to allow some amount of deviation regarding the vertex labels as well by introducing a similarity measure on the set of vertex labels. In this paper, however, we do not deal with this aspect. We require exact coincidence for the labels of two vertices to be matched since we are primarily interested in the effect of how publishing (perturbed) distances influences the risk of identity disclosure.

\subsection*{The product graph} In order to tackle the problem of finding approximate common subgraphs of two vertex-labelled and edge-weighted graphs $\mathcal G_1$ and $\mathcal G_2$, we transform this problem to a clique detection problem in an appropriately defined simple undirected graph $\mathcal G_\otimes$, the product graph of $\mathcal G_1$ and $\mathcal G_2$.

\begin{definition}\label{def:product_graph}
Let $\mathcal G_1=(V,E,\lambda_V,\omega_E)$ and $\mathcal G_2=(W,F,\lambda_W,\omega_F)$ be two vertex-labelled and edge-weighted graphs as in definition \ref{def:graph}. The \textit{product graph} $\mathcal G_\otimes=(V_\otimes,E_\otimes)$ of $\mathcal G_1$ and $\mathcal G_2$ is a simple undirected graph defined through
\begin{align*}
V_\otimes &= \{(v,w)\in V \times W : \lambda_V(v)=\lambda_W(w)\} \quad \text{ and }\\
E_\otimes &= \bigg \{ \{(v_1,w_1),(v_2,w_2) \} : v_1\neq v_2, w_1\neq w_2 \text{ and either (a) }v_1v_2 \in E, w_1w_2 \in F \text{ and} \\
&\hspace{0.9cm}\omega_E(v_1v_2) \approx \omega_F(w_1w_2) \text{, or (b) }v_1v_2 \notin E \text{ and } w_1w_2 \notin F\bigg\}.
\end{align*}
\end{definition}

The announced transformation of the maximum approximate common subgraph problem into the maximum clique problem is achieved via the following theorem:
\begin{theorem}\label{theorem}
Consider the setup of definition \ref{def:product_graph}. There is a one-to-one correspondence between the approximate common subgraphs of order $k$ and $k$-cliques of $\mathcal G_\otimes$.
\end{theorem}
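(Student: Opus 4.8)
The plan is to exhibit an explicit map between the two collections and verify that it is a bijection, using the standard association-graph argument adapted to the vertex-labelled, edge-weighted, approximate setting. Given an approximate common subgraph of order $k$, specified by subsets $S\subseteq V$, $T\subseteq W$ and a bijection $\varphi\colon S\to T$ as in definition~\ref{def:common_subgraph}, I would associate to it the vertex set
\[
C_\varphi := \{(s,\varphi(s)) : s\in S\}\subseteq V\times W .
\]
First I would check that $C_\varphi$ is actually a set of vertices of $\mathcal G_\otimes$: condition~(i) of definition~\ref{def:common_subgraph} says $\lambda_V(s)=\lambda_W(\varphi(s))$ for every $s\in S$, which is exactly the membership criterion for $V_\otimes$ in definition~\ref{def:product_graph}. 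Since the map $s\mapsto(s,\varphi(s))$ is injective, we get $|C_\varphi|=|S|=k$.

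Next I would show that $C_\varphi$ is a clique. Take two distinct elements $(s_1,\varphi(s_1))$ and $(s_2,\varphi(s_2))$ of $C_\varphi$, so $s_1\neq s_2$; because $\varphi$ is a bijection we also have $\varphi(s_1)\neq\varphi(s_2)$, which supplies the two distinctness requirements built into $E_\otimes$. Condition~(ii) of definition~\ref{def:common_subgraph} then yields precisely alternative (a) or (b) of the edge definition in definition~\ref{def:product_graph}, so the two vertices are adjacent in $\mathcal G_\otimes$. As this holds for every pair, $C_\varphi$ is a $k$-clique.

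For the converse I would start from an arbitrary $k$-clique $C=\{(v_1,w_1),\dots,(v_k,w_k)\}$ of $\mathcal G_\otimes$ and reverse the construction. Here lies the step I expect to be the main obstacle, although it is short once isolated: I must recover a genuine bijection, which means the first coordinates $v_i$ must be pairwise distinct and likewise the second coordinates $w_i$. This is exactly what the clique condition forces. Indeed, any two distinct clique vertices are joined by an edge, and both branches (a) and (b) of $E_\otimes$ carry the clauses $v_i\neq v_j$ and $w_i\neq w_j$; hence the projections $C\to V$ and $C\to W$ are both injective on $C$. (Without these distinctness clauses a clique could reuse a first coordinate, and the recovered relation would fail to be a map, so I would emphasise that this is precisely where they earn their keep.) I then set $S:=\{v_i\}$, $T:=\{w_i\}$, and define $\varphi(v_i):=w_i$, which is well defined and bijective by the two injectivity statements; condition~(i) follows from $(v_i,w_i)\in V_\otimes$ and condition~(ii) follows edge-by-edge from the definition of $E_\otimes$, giving an approximate common subgraph of order $k$.

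Finally I would observe that the two assignments $\varphi\mapsto C_\varphi$ and $C\mapsto(S,T,\varphi)$ are mutually inverse: starting from $(S,T,\varphi)$, projecting $C_\varphi$ back recovers $S$, $T$ and $\varphi$ verbatim, and starting from a clique $C$, forming $C_\varphi$ from the recovered bijection returns $C$. This establishes the claimed one-to-one correspondence. The only genuinely delicate point is the injectivity argument in the reverse direction; the remaining verifications are routine unwindings of definitions~\ref{def:common_subgraph} and~\ref{def:product_graph}, and I note that since the graphs here are complete, alternative (b) never occurs, so in practice only the weight-matching branch (a) need be checked.
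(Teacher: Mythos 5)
Your proof is correct and takes essentially the same route as the paper's: the same explicit correspondence $(S,T,\varphi)\mapsto\{(s,\varphi(s)):s\in S\}$ in the forward direction and its reversal for cliques. If anything, you are slightly more careful than the paper, which simply asserts that $\varphi(v_i):=w_i$ is a bijection in the converse direction, whereas you explicitly derive the pairwise distinctness of the coordinates from the clauses $v_1\neq v_2$, $w_1\neq w_2$ in the definition of $E_\otimes$ and also verify that the two assignments are mutually inverse.
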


\begin{proof}
Let an approximate common subgraph of $\mathcal G_1$ and $\mathcal G_2$ of order $k$ be given by the vertex sets $S=\{v_1,\ldots,v_k\}\subseteq V$ and $T=\{w_1,\ldots,w_k\} \subseteq W$, respectively. Without loss of generality, we assume $\varphi(v_i)=w_i$ for $i\in \{1,\ldots,k\}$ under the corresponding subgraph isomorphism $\varphi$. Condition (i) in definition \ref{def:common_subgraph} yields $(v_i,w_i) \in V_\otimes$ for $i=1, \ldots, k$. Moreover, for distinct $i, j \in \{1,\ldots,k\}$ we have $v_iv_j \in E \Leftrightarrow w_iw_j\in F$.
If $v_iv_j \in E$ condition (ii)  in definition \ref{def:common_subgraph} implies that $\omega_E(v_iv_j) \approx \omega_F(\varphi(v_i)\varphi(v_j))=\omega_F(w_iw_j)$ and $(v_i,w_i)$ and $(v_j,w_j)$ are adjacent in $\mathcal G_\otimes$.
Because $i,j$ were chosen arbitrarily, $\mathcal C:=\{(v_1,w_1),\ldots,(v_k,w_k)\}$ forms a $k$-clique in $\mathcal G_\otimes$.
\par\medskip
Conversely, let $\mathcal C$ be a $k$-clique in $\mathcal G_\otimes$ given by vertices $(v_1,w_1),\ldots,(v_k,w_k) \in V_\otimes$. We define $S=\{v_1,\ldots,v_k\}$, $T=\{w_1,\ldots,w_k\}$ and $\varphi : S \to T$ via $\varphi(v_i)=w_i$. Then $\varphi$ is a bijection and we obtain
$\lambda_V(v_i)=\lambda_W(w_i)=\lambda_W(\varphi(v_i)) \text{ for } i=1,\ldots,k$. Thus, condition (i) in definition \ref{def:common_subgraph} is satisfied. The validity of the second condition follows from the fact that either $v_iv_j \notin E$ and $w_iw_j \notin F$ or $v_iv_j \in E$ and $w_iw_j \in F$ and that we have $\omega_E(v_iv_j) \approx \omega_F(w_iw_j) = \omega_F(\varphi(v_i)\varphi(v_j))$ in the latter case.
\end{proof}

\begin{corollary}
The problem of finding a maximum approximate common subgraph of two vertex-labelled and edge-weighted graphs is equivalent to the problem of detecting a maximum clique in the associated product graph.
\end{corollary}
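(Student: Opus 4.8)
The plan is to obtain the corollary as a direct consequence of Theorem~\ref{theorem}, the only additional work being the observation that the bijection furnished by the theorem is compatible with passing to the maximum. First I would fix notation: for each $k$, let $\Phi_k$ denote the one-to-one correspondence guaranteed by the theorem between the approximate common subgraphs of $\mathcal G_1$ and $\mathcal G_2$ of order $k$ and the $k$-cliques of $\mathcal G_\otimes$. The decisive structural point is that $\Phi_k$ is \emph{cardinality-preserving}: an approximate common subgraph $(S,T,\varphi)$ of order $k$ (so $|S|=|T|=k$) is sent to a clique on exactly $k$ vertices of $V_\otimes$. Hence, for every $k$, an approximate common subgraph of order $k$ exists if and only if $\mathcal G_\otimes$ contains a $k$-clique.

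Next I would take the supremum over $k$ on both sides. Let $k^\ast$ be the largest order attained by any approximate common subgraph of $\mathcal G_1$ and $\mathcal G_2$, and let $c$ denote the size of a maximum clique in $\mathcal G_\otimes$. Since the fibrewise equivalence above holds for every $k$, no order strictly larger than $c$ can be realised on the left and no clique strictly larger than $k^\ast$ can exist on the right, so the two optima coincide, $k^\ast=c$. A \emph{maximum} approximate common subgraph is by definition one of order $k^\ast$; applying $\Phi_{k^\ast}$ to it produces a $k^\ast$-clique, which is a maximum clique of $\mathcal G_\otimes$. Conversely, applying $\Phi_{k^\ast}^{-1}$ to any maximum clique returns a maximum approximate common subgraph. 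This yields the desired correspondence between the optimal solutions of the two problems.

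For the computational reading of the word ``equivalent'' I would additionally record that the product graph is cheap to build: given $\mathcal G_1$ and $\mathcal G_2$, the vertex set $V_\otimes$ and edge set $E_\otimes$ of definition~\ref{def:product_graph} can be assembled in time polynomial in $|V|\cdot|W|$, since one merely checks the label condition for each candidate pair and the adjacency-plus-weight condition ($\approx$) for each candidate edge. Consequently any algorithm for the maximum clique problem, exact or approximate, can be run on $\mathcal G_\otimes$ to solve the maximum approximate common subgraph problem, and the returned clique is translated back into an approximate common subgraph via $\Phi_{k^\ast}^{-1}$ at no extra asymptotic cost, which is precisely how the attack is operationalised in section~\ref{sec:experimental_results}.

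I do not expect a genuine obstacle here, as the corollary is essentially a repackaging of the theorem. The only subtlety worth stating carefully is that Theorem~\ref{theorem} is phrased \emph{fibrewise} (one bijection per fixed order $k$), whereas the corollary concerns the \emph{global} optimum; the mild point to verify is therefore that these per-order bijections are mutually consistent with taking maxima, which is immediate once cardinality-preservation of $\Phi_k$ is noted, giving $k^\ast=c$. A secondary caveat I would flag is that the converse algorithmic reduction (realising an arbitrary maximum clique instance as a product graph) is not claimed; what the corollary asserts, and all the attack requires, is the reduction from the maximum approximate common subgraph problem to maximum clique.
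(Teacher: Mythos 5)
Your proof is correct and follows exactly the route the paper intends: the corollary is stated there without proof as an immediate consequence of Theorem~\ref{theorem}, and your argument simply makes explicit the (straightforward) observation that the order-preserving, fibrewise bijections force the two optima to coincide. The additional remarks on polynomial-time construction of the product graph are sound but go beyond what the paper claims or needs.
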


Before putting all the ingredients discussed so far together, we want to make some remarks regarding theorem \ref{theorem}, which is folklore within the pattern recognition community. The first time that the correspondence between common subgraphs of two graphs and the cliques in the corresponding product graph was considered was in~\cite{levi} for exact isomorphisms between simple graphs including vertex labels. Since then, this approach has become a standard tool for tackling graph matching problems in various fields of application~\cite{conte}.

The definition of the product graph recently presented in~\cite{fober} is equivalent to the one we use in this paper. However, in that paper the authors relax the concept of a clique (which appears to be too restrictive for their application) to the less restrictive concept of a $\gamma$-quasi-clique and propose an algorithm for quasi-clique detection based on local clique merging.

\smallskip

For our application only requirement (a) in the construction of the edge set $E_\otimes$ is relevant since we deal with complete graphs only. The condition that $v_1\neq v_2 \text{ and } w_1\neq w_2$ in the construction of $E_\otimes$ guarantees that the cliques in the product graph correspond to one-to-one matches between the vertices of the two graphs to be matched.
For complete graphs without loops, this condition is implicitly part of the requirement that $\omega_E(v_1v_2) \approx \omega_F(w_1w_2)$. If we had permitted loops in our graph model and assigned a weight of $0$ to each of these loops, $v_1\neq v_2 \text{ and } w_1\neq w_2$ would have been necessary in order to keep the statement of theorem \ref{theorem} true, which is illustrated by the following example:

\begin{example}
Consider $\mathcal G_1=(V,E,\lambda_V,\omega_E)$ given by $V=\{1\}$. Then, necessarily $E=\emptyset$ and $\omega_E$ is redundant. We consider the product graph $\mathcal G_\otimes$ of $\mathcal G_1$ with $\mathcal G_2=(W,F,\lambda_W,\omega_F)$ where $W=\{ 2,3\}$, $F=\big\{\{2,3\} \big\}$ and $\omega_F \big ( \{2,3\} \big )=\frac{\varepsilon}{2}$ for some $\varepsilon>0$. Moreover, we assume that $\lambda_V(1)=\lambda_W(2)=\lambda_W(3)$. Thus, the vertex set $V_\otimes$ of the product graph is given by $V_\otimes=\{(1,2),(1,3)\}$. Let us define that two edge weights are approximately the same if the absolute value of their difference is less than $\varepsilon$. Then, according to our definition of the edge set $E_\otimes$ the two vertices from $V_\otimes$ are not adjacent because they coincide in their first component. Without the condition $v_1\neq v_2 \text{ and } w_1\neq w_2$ and allowing loops of zero weight in the graph models, however, these two vertices would have been adjacent. Obviously, in this case the resulting clique $\mathcal C=V_\otimes$ would not be related to an approximate common subgraph of $\mathcal G_1$ and $\mathcal G_2$.
\end{example}

\subsection*{Overview}

Let us now formulate the overall graph theoretic linkage attack.
\begin{center}
\fbox{\parbox{0.98\linewidth}{
\textbf{Graph Theoretic Linkage Attack on Microdata in a Metric Space}
\par\medskip
\begin{tabular}{ll}
INPUT & Target data $(T_1,D_1)$, identification data $(T_2,D_2)$\\
OUTPUT & List of matches between records from $T_1$ and $T_2$
\end{tabular}

\begin{compactenum}
\item Build target graph $\mathcal G_1$ from $(T_1,D_1)$.
\item Build identification graph $\mathcal G_2$ from $(T_2,D_2)$ (possible under assumptions \ref{assumption:identification_file} and \ref{assumption:compute_distances}).
\item Build product graph $\mathcal G_\otimes$ (requires reasonable definition of $\approx$).
\item Find a maximum clique $\mathcal C_{\text{max}}$ in $\mathcal G_\otimes$ (using some maximum clique detection algorithm).
\item Extract matches from $\mathcal C_{\text{max}}$.
\end{compactenum}
}}
\end{center}

Let us make a brief comment on step 4 of the attack: As already indicated in section~\ref{sec:background}, there is a vast literature concerning the problem of maximum clique detection in graphs. A systematic comparison of the prevalent techniques to tackle this problem in the context of our application goes beyond the scope of this paper and is postponed to future research.

To conclude this section, we illustrate the process of the proposed graph theoretic linkage attack using a small-scale example which makes use of the data summarised in appendix \ref{sec:example_data}.

\begin{example}\label{example:poets}
Consider microdata table~\ref{table:poets} in appendix \ref{sec:example_data}, which contains information about various important European poets. This table is anonymised by removing the direct identifier \texttt{name}, generalising the attribute \texttt{yob} (year of birth) to \texttt{cob} (century of birth) and removing the information about the birth location (\texttt{loc}). The attribute \texttt{language} remains unchanged. This yields anonymised table~\ref{table:poets_anonymized} in appendix \ref{sec:example_data}.

\smallskip

While the spatial information \texttt{loc} has been deleted from this table, the distance matrix $D_1$ (see appendix~\ref{sec:example_data}) containing the geographic distances between the birth locations is meant to be published in addition to table~\ref{table:poets_anonymized}. We assume that the snooper is in possession of the identification microdata in table~\ref{table:poets_intruder}, i.e. the attributes \texttt{cob} and \texttt{language} serve as quasi-identifiers. By geocoding the birth locations and calculating the geographic distances, the snooper obtains the distance matrix $D_2$. Graph models $\mathcal G_1$ and $\mathcal G_2$ for the target and identification data can be built by using this information and are visualised in figure \ref{fig:example_poets_models}.

\smallskip

Table \ref{table:vertex_matches} lists all the possible matches if the snooper takes only the vertex labels into consideration. These eleven matches form the vertex set of the product graph as well. Note that this set would already constitute the final outcome of a linkage attack where the distances are not taken into consideration.
\begin{figure}[!h]
\begin{minipage}[b]{0.5\linewidth}
\centering
\includegraphics[width=0.99\textwidth]{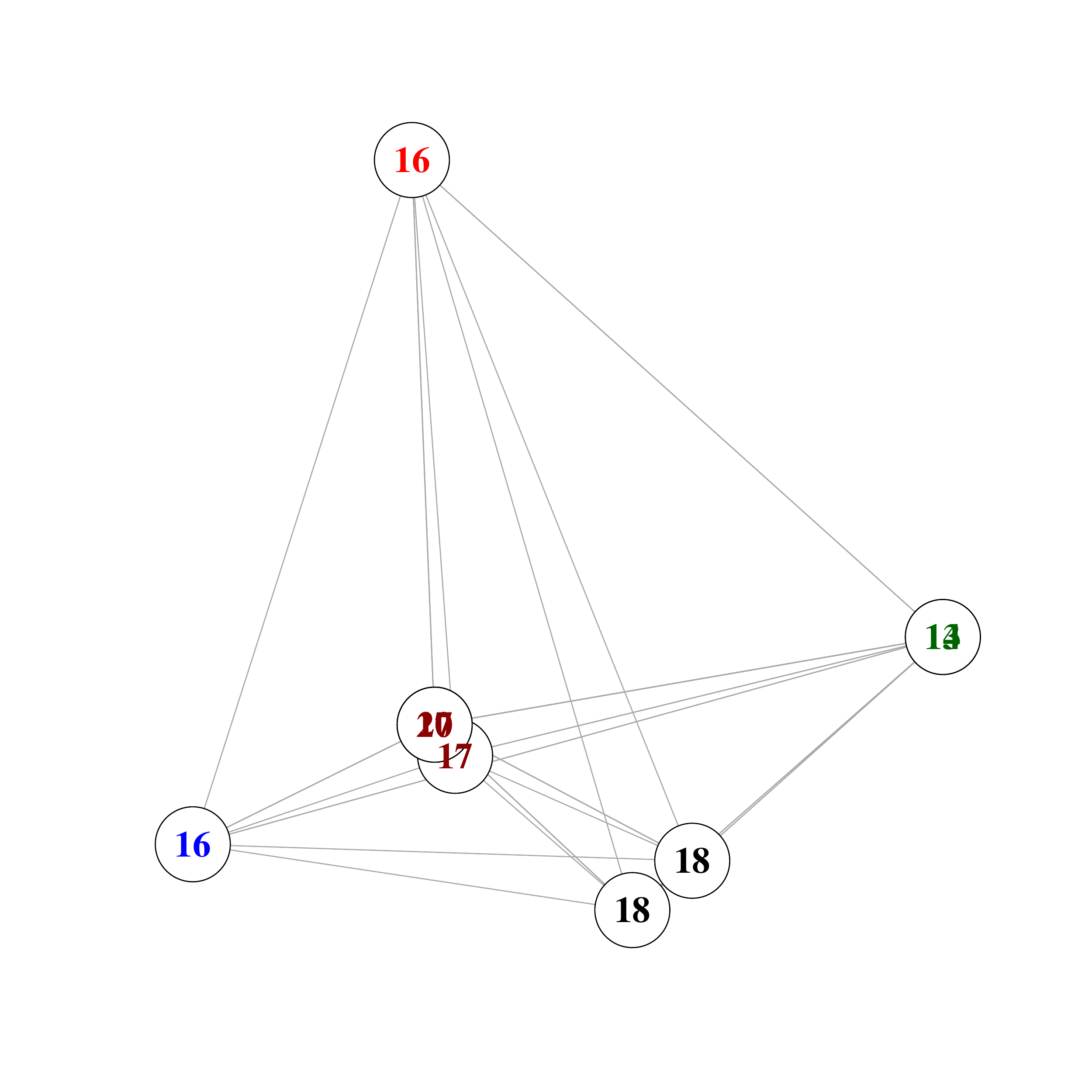}
\end{minipage}
\begin{minipage}[b]{0.5\linewidth}
\centering
\includegraphics[width=0.99\textwidth]{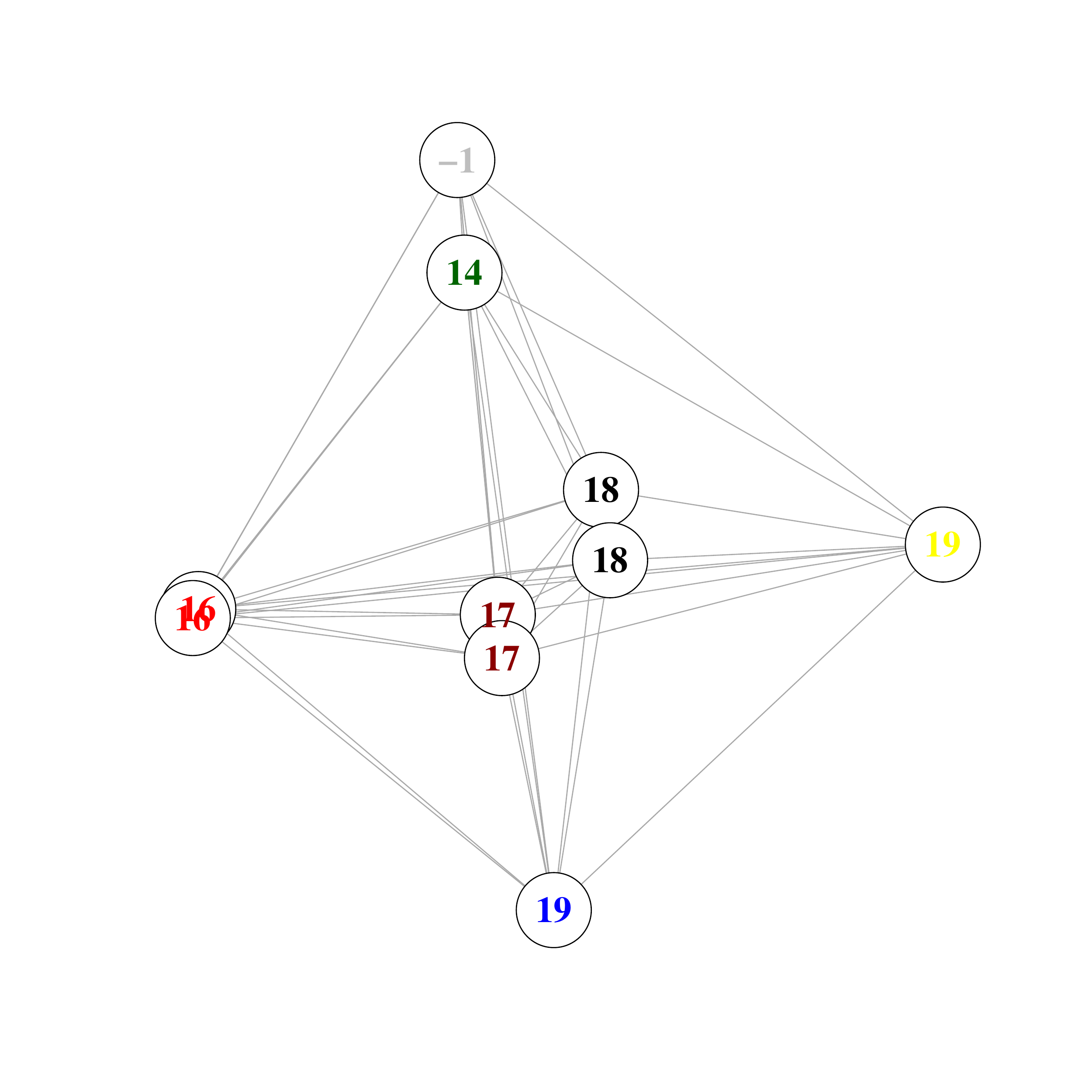}
\end{minipage}
\caption{Graph models for target (left) and identification (right) microdata in example \ref{example:poets}. The layout of the graphs was chosen such that the edge lengths give an approximate indication of the distances. The attribute \texttt{language} is indicated by the vertex label colour, whereas the attribute \texttt{cob} is indicated by the vertex label itself.}\label{fig:example_poets_models}
\end{figure}
\begin{table}[!h]
\centering
\begin{tabular}{ccc}
  \hline
vertex of product graph & rownumber target file & rownumber identification file\\ 
  \hline
1 &   1 &   1 \\ 
  2 &   2 &   2 \\ 
  3 &   3 &   3 \\ 
  4 &   6 &   3 \\ 
  5 &   4 &   4 \\ 
  6 &   7 &   4 \\ 
  7 &   3 &   6 \\ 
  8 &   6 &   6 \\ 
  9 &   4 &   7 \\ 
  10 &   7 &   7 \\ 
  11 &   2 &   9 \\ 
   \hline
\end{tabular}
\caption{Possible matches between tables~\ref{table:poets_anonymized} and~\ref{table:poets_intruder} with respect to the quasi-identifiers \texttt{cob} and \texttt{language} only, i.e. vertex labels in the accompanying graph models.}\label{table:vertex_matches}
\end{table}

For the construction of the product graph, we allow an absolute deviation of five kilometers with respect to the edge weights, i.e. we define $\omega_E(v_1v_2) \approx \omega_F(w_1w_2) \Leftrightarrow |\omega_E(v_1v_2)-\omega_F(w_1w_2)|<5$.\footnote{We previously mentioned that allowing such a deviation is already necessary because of errors that appear due to the fact that the data holder and snooper generally use different methods for geocoding and distance computation. This fact was also addressed in this example by geocoding the birth locations of the target microdata via Wikipedia and the birth locations of the identification file by means of the command \texttt{geocode} provided by the R package \texttt{ggmap}.} This definition of $\approx$ leads to the product graph shown in figure~\ref{fig:product_graph}.

\begin{figure}[]
  \centering
  \includegraphics*[width=0.5\textwidth]{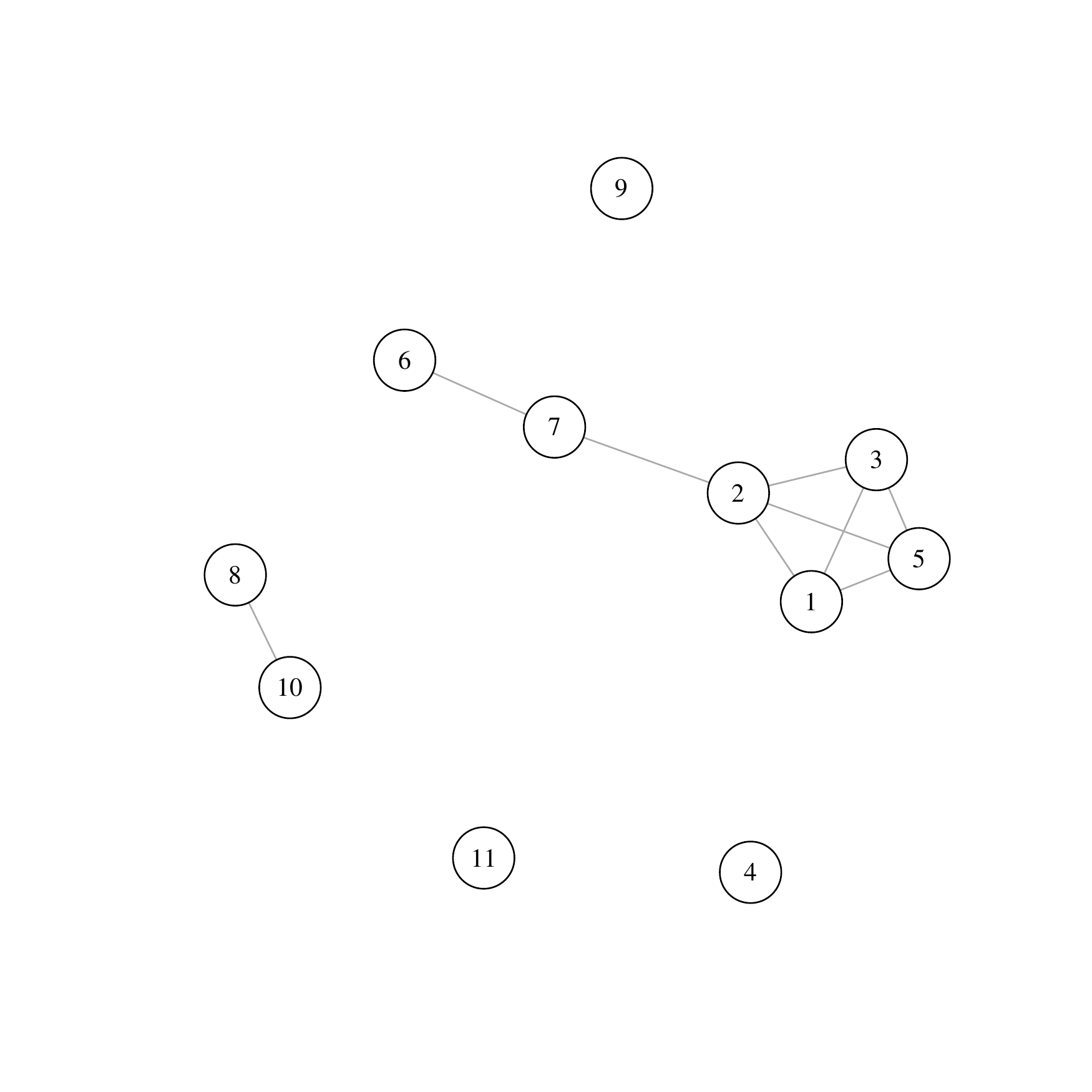}
  \caption{Product graph in example~\ref{example:poets}.}\label{fig:product_graph}
\end{figure}

As can be easily seen from figure~\ref{fig:product_graph}, the product graph contains a unique maximum clique $\mathcal C:=\{1,2,3,5\}$. Therefore, a snooper following the protocol of the graph theoretic linkage attack would accept the potential matches in rows 1,2,3 and 5 in table \ref{table:vertex_matches} as matches and reject the remaining ones.
\end{example}

Although example~\ref{example:poets} is artificial, it illustrates some of the phenomena that also appear when real-world data are taken into consideration:
\begin{itemize}
\item The definition of $\approx$ has to be chosen carefully. In the present example, distances between cities scattered all over the European continent are considered so that even the rather rough definition above (allowing for an absolute deviation of five kilometers) will yield a useful result. In general, the definition of $\approx$ has to be chosen such that as many common edges as possible of the target and identification graph are detected correctly, i.e. not classifying too many edges as approximately the same that are actually different. The definition of $\approx$ will be studied in greater detail as part of the simulation study in section~\ref{sec:experimental_results}.
\item A successful match of the respective first records of both tables would have already been possible unambiguously without the additional distance information because both records are unique in their tables with respect to the corresponding quasi-identifiers. Nevertheless, using the additional distance information increases the credibility for this specific matching, which is now not only supported by the coincidence of the quasi-identifiers but also by the coincidence of the distances to other matches.
\item However, in certain cases unambiguous matching is only possible because of the additional information about the distances. For example, record 3 of the target table could be matched with records 3 and 6 of the identification table only by taking the quasi-identifiers into consideration. This tie is resolved in our example by the extra information given by the edge weights.
\item Evidently, in practise there will be ties in the data that cannot be resolved by our method either. In our example, the records 9 and 10 of the target file do not differ according to their quasi-identifiers, however, they also cannot be distinguished by considering the distances to these records because the corresponding point locations (\texttt{loc}=Paris in both cases) coincide.
\item Finally, the attack has reduced the number of matches from eleven in table \ref{table:vertex_matches} to four. These matches indeed correspond to the actual overlap of the target and identification file.
\end{itemize}

Our toy example has shown that publishing inter-record distances might increase the risk of identity disclosure for microdata files. We confirm this result in the following section by investigating the effect of random noise addition to the input coordinates, which is a standard technique for the anonymisation of spatial point data.

\section{Experimental results}\label{sec:experimental_results}

\subsection*{Data} The data for the simulation study were generated as follows: In the first step, addresses from the German telephone book were sampled at random. Subsequently, geographic latitudes and longitudes based on the World Geodetic System 1984 were assigned to these addresses using the \texttt{geocode} command from the R package \texttt{ggmap}~\cite{kahle}. Finally, the geographic distances between the addresses were calculated to obtain the corresponding distance matrix.

\smallskip

We randomly assigned the points of the resulting metric spaces to example microdata containing (besides an ID) attributes concerning gender and age, which served as quasi-identifiers in our experiments. The attribute values were sampled in accordance with the actual distribution of these attributes derived during the German census 2011.\footnote{These demographic statistics can be downloaded from \url{https://ergebnisse.zensus2011.de/auswertungsdb/download?pdf=00&tableId=BEV_1_1_1&locale=DE}.}

\smallskip

 We generated data where both the size $N_1$ of the target and $N_2$ of the identification file where equal to $500$. The overlap $N_{\text{common}}$ of common records was chosen equal to $50$. The target and the identification file are visualised in figure~\ref{fig:files}. Note that the classification with respect to age (eleven age intervals) is rather rough; this guaranteed the existence of many duplicates with respect to the quasi-identifiers in our test microdata, which would result in ties when performing a classical linkage attack. Indeed, the order of the resulting product graph was equal to $|V_\otimes|=15517$.

\begin{figure}[]
\centering
\includegraphics[width=0.95\linewidth]{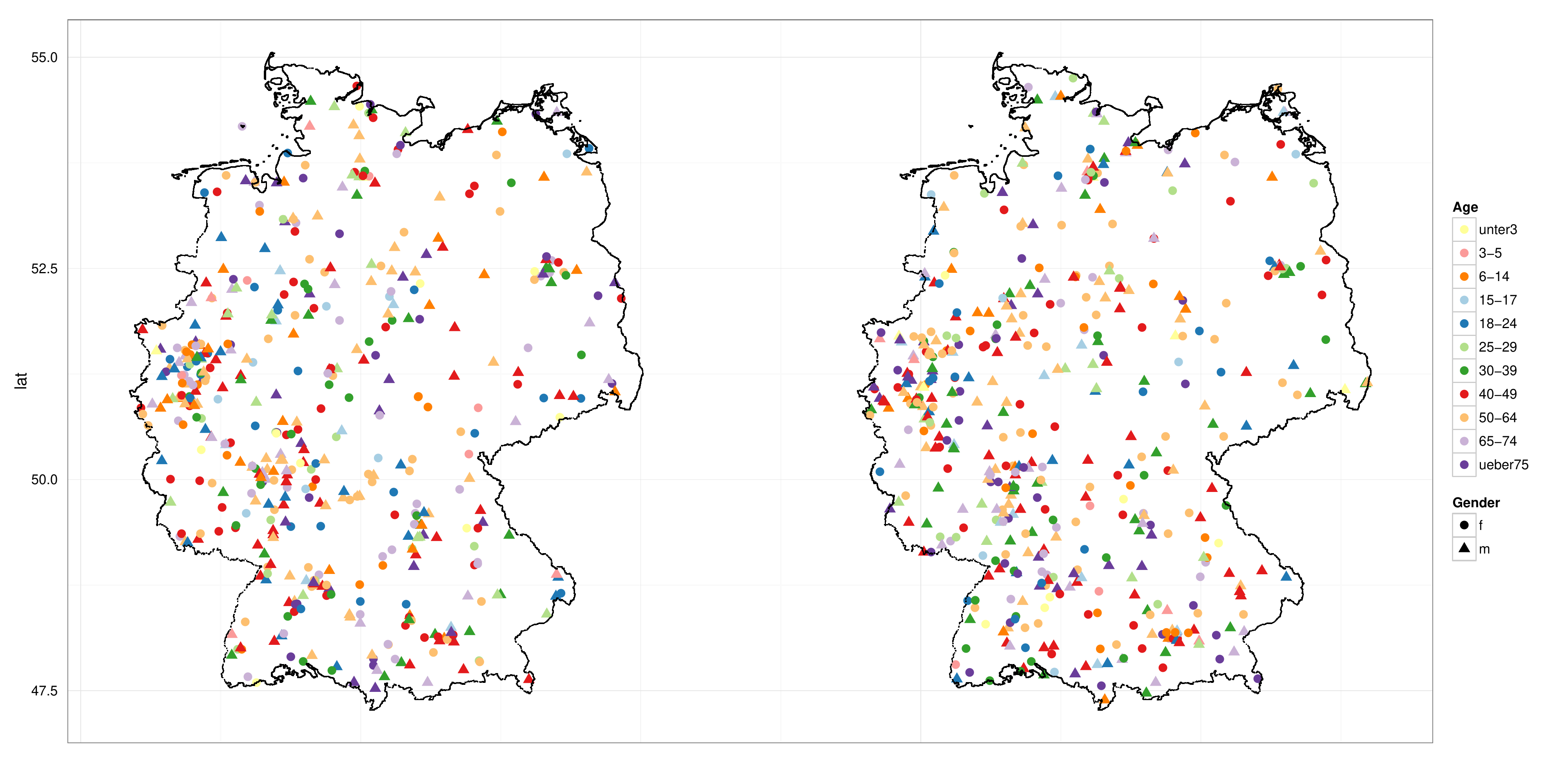}
\caption{Test data for the simulation study: Both the target (left) and identification file (right) contain $500$ records of which $50$ are common. Quasi-identifiers (\texttt{age}, \texttt{gender}) were sampled according to the actual distribution of these attributes due to the demographic statistics derived during the German census in 2011. Without the additionally released approximate distances the target file can be regarded as sufficiently anonymised: a classical linkage attack leads to a set of $15517$ potential matches between the target and identification file.}
\label{fig:files}
\end{figure}

\subsection*{Perturbation technique} A standard technique for the anonymisation of spatial point data consists in the addition of random noise to their coordinates (see section 3.2 in~\cite{armstrong}). In this section, we consider the performance of the proposed graph theoretical linkage attack under this anonymisation technique. To be more precise, $\mathcal N(0,\sigma^2)$-distributed Gaussian noise was added to the input coordinates of the target file before the distance matrix was calculated. Different instances of the standard deviation $\sigma$ were considered.

\subsection*{Fine-tuning of the attack} A suitable definition for the relation $\approx$ has to be found for the generation of the product graph in the graph theoretical linkage attack. Following Kerckhoffs' principle~\cite{petitcolas} (which implies that the security of a cryptosystem/anonymisation technique must not depend on the concealment of the algorithm in use), we assume that the data snooper knows that Gaussian noise is added to the geographic coordinates before the distances are calculated and, furthermore, that the standard deviation $\sigma$ is known to him (the latter assumption is in conformance with~\cite{armstrong}, who emphasise that \textit{all useful spatial analyses of masked data require some knowledge about the characteristics of the mask used}).

\smallskip

Under the assumption of a Euclidean distance function, the effect of random perturbation of the input coordinates on the squared distances can be studied theoretically, an approach which has been considered in~\cite{kenthapadi}. Such a rigorous mathematical analysis appears to be more difficult in the case of geographical distances.
For this reason, we assume that the snooper performs a little simulation study by which she/he investigates the effect of perturbation by Gaussian noise to the calculation of distances. To imitate this course of action, we sampled 1000 pairs of points from the area of the Federal Republic of Germany for each considered value of $\sigma$ and compared the distances before and after addition of Gaussian noise. Several sample quantiles and the sample variance (the latter will only be used for the evaluation of our experiments) of the deviation of the distances (which is defined as $d-d'$ where $d$ denotes the original distance and $d'$ the distance after perturbation) have been gathered and are recorded in table~\ref{table:distance_deviation}.

\smallskip

We use the empirical quantiles to define the interpretation of $\approx$: For a threshold parameter $\alpha \in (0,1)$, we define that two edge weights satisfy the relation $\approx$ if the corresponding deviation is greater than the empirical $\frac{1-\alpha}{2}$-quantile and smaller than the $\frac{1+\alpha}{2}$-quantile for the current value of $\sigma$. In this case, the distances from the identification file take on the role of $d$ and the distances from the target file the one of $d'$. The threshold parameter $\alpha$ chosen by the snooper is supposed to guarantee that a common edge of the target and identification graph is detected by the snooper with probability approximately equal to $\alpha$. Its effect will also be considered within this section.

\begin{table}[]
\centering\footnotesize
\begin{tabular}{rrrrrrrrr}
  \hline
$\sigma$ & 0.05 & 0.1 & 0.25 & 0.5 & 0.75 & 0.9 & 0.95 & sample variance \\ 
  \hline
0.005 & -1.1088 & -0.8558 & -0.4261 & 0.0226 & 0.4496 & 0.9256 & 1.2192 & 0.4799 \\ 
  0.010 & -2.3909 & -1.7191 & -0.8750 & 0.0798 & 0.9733 & 1.8218 & 2.2642 & 1.9677 \\ 
  0.015 & -3.3063 & -2.4633 & -1.2288 & 0.0810 & 1.4714 & 2.7492 & 3.4624 & 4.3312 \\ 
  0.020 & -4.6132 & -3.6261 & -2.0787 & -0.0615 & 1.7278 & 3.4013 & 4.3512 & 7.8732 \\ 
  0.025 & -5.6147 & -4.2924 & -2.2826 & -0.1592 & 2.2177 & 4.1254 & 5.3089 & 11.5378 \\ 
  0.030 & -6.4952 & -4.9210 & -2.5024 & 0.1763 & 2.9190 & 5.2730 & 6.6511 & 16.1313 \\ 
  0.035 & -8.3848 & -6.2351 & -3.0673 & -0.0665 & 3.0206 & 6.0428 & 7.9411 & 23.6334 \\ 
  0.040 & -9.1530 & -6.7866 & -3.7315 & -0.1884 & 3.4698 & 7.0085 & 8.7236 & 30.2768 \\ 
  0.045 & -11.0830 & -8.2160 & -4.1860 & -0.0680 & 3.6953 & 7.6620 & 10.2638 & 39.6836 \\ 
  0.050 & -11.4906 & -8.9544 & -4.7386 & -0.0057 & 4.5955 & 8.6728 & 11.4998 & 48.8299 \\ 
   \hline
\end{tabular}
\caption{Sample quantiles and variance of the considered distance deviation $d-d'$ for different values of $\sigma$.}\label{table:distance_deviation} 
\end{table}

\subsection*{Implementation} All the experiments reported here were performed using R and the exact maximum clique detection algorithm proposed in~\cite{konc}.\footnote{We adapted the C++ implementation of this algorithm, which is available from \url{http://www.sicmm.org/~konc/maxclique/}, for our purposes.} All the accompanying visualisations were created in R.

\subsection*{Evaluation of the attack}
The matches and non-matches between the target and identification file gathered by the proposed graph theoretical linkage attack were classified as true positives (successful deanonymisation), false positives (failed deanonymisation), false negatives (records belonging to the same entity have been missed) and true negatives (records have been correctly classified as belonging to distinct entities). The quality measures considered are based on the number of true positives (\TP), false positives (\FP) and false negatives (\FN). More precisely, we consider
\begin{align*}
\pre&=\frac{\TP}{\TP+\FP} \quad \textit{(precision)}\text{,} \quad \text{ and }\\
\rec&=\frac{\TP}{\TP+\FN} \quad \textit{(recall)}\text{,}
\end{align*}
which are two standard measures in the evaluation of data linkage processes~\cite{christen_goiser}.
\enlargethispage{1cm}

\subsection*{Simulation design and results}

In our experiments, we varied the noise parameter $\sigma$ as well as the threshold parameter $\alpha$. For each parameter setup, the simulation was repeated $n=100$ times. The mean of precision and recall over all iterations for the chosen parameter setups can be found tables~\ref{table:results_prec_1} and~\ref{table:results_rec_1}. Visualisations of these results can be found in figures~\ref{fig:results} and~\ref{fig:results_dep_alpha}. In addition, typical outcomes of the graph theoretic linkage attack are visualised in figure~\ref{fig:typical_results}.

\enlargethispage{1cm}

\begin{table}[]
  \centering\scriptsize
  \begin{tabular}{*{1}{l}*{10}{r}}
    \toprule
    \( \alpha \) \textbar\ \( \sigma \) & \multicolumn{1}{c}{0.005} & \multicolumn{1}{c}{0.010} & \multicolumn{1}{c}{0.015} & \multicolumn{1}{c}{0.020} & \multicolumn{1}{c}{0.025} & \multicolumn{1}{c}{0.030} & \multicolumn{1}{c}{0.035} & \multicolumn{1}{c}{0.040} & \multicolumn{1}{c}{0.045} & \multicolumn{1}{c}{0.050} \\
    \midrule
    0.1 & 0.7800 & 0.7215 & 0.3415 & 0.2752 & 0.3232 & 0.2193 & 0.1549 & 0.1571 & 0.1136 & 0.1584 \\
    0.2 & 0.9717 & 0.8929 & 0.8769 & 0.8229 & 0.7569 & 0.6121 & 0.5657 & 0.5094 & 0.4780 & 0.4785 \\
    0.3 & 0.9831 & 0.9513 & 0.9047 & 0.8502 & 0.8255 & 0.7728 & 0.6862 & 0.6567 & 0.6073 & 0.5975 \\
    0.4 & 0.9829 & 0.9558 & 0.9037 & 0.8700 & 0.8358 & 0.7766 & 0.7411 & 0.6651 & 0.6428 & 0.6410 \\
    0.5 & 0.9808 & 0.9458 & 0.9133 & 0.8721 & 0.8374 & 0.7834 & 0.7505 & 0.6832 & 0.6526 & 0.6274 \\
    0.6 & 0.9830 & 0.9436 & 0.9102 & 0.8725 & 0.8315 & 0.7780 & 0.7430 & 0.6974 & 0.6604 & 0.6229 \\
    0.7 & 0.9803 & 0.9405 & 0.9087 & 0.8675 & 0.8255 & 0.7707 & 0.7434 & 0.6948 & 0.6556 & 0.6086 \\
    0.8 & 0.9795 & 0.9373 & 0.9008 & 0.8539 & 0.8027 & 0.7666 & 0.7248 & 0.6894 & 0.6484 & 0.6065 \\
    0.9 & 0.9764 & 0.9304 & 0.8884 & 0.8351 & 0.7954 & 0.7513 & 0.7017 & 0.6605 & 0.6159 & 0.5876 \\
    \bottomrule
  \end{tabular}
  \caption{\textbf{Average precision} in dependence on the parameters $\sigma$ and $\alpha$ over $n=100$ repetitions.}
  \label{table:results_prec_1}
\end{table}

\begin{table}[]
  \centering\scriptsize
  \begin{tabular}{*{1}{l}*{10}{r}}
    \toprule
    \( \alpha \) \textbar\ \( \sigma \) & \multicolumn{1}{c}{0.005} & \multicolumn{1}{c}{0.010} & \multicolumn{1}{c}{0.015} & \multicolumn{1}{c}{0.020} & \multicolumn{1}{c}{0.025} & \multicolumn{1}{c}{0.030} & \multicolumn{1}{c}{0.035} & \multicolumn{1}{c}{0.040} & \multicolumn{1}{c}{0.045} & \multicolumn{1}{c}{0.050} \\
    \midrule
    0.1 & 0.0664 & 0.0612 & 0.0302 & 0.0284 & 0.0326 & 0.0228 & 0.0168 & 0.0174 & 0.0130 & 0.0196 \\
    0.2 & 0.1166 & 0.1034 & 0.1126 & 0.1144 & 0.1120 & 0.0870 & 0.0864 & 0.0770 & 0.0714 & 0.0818 \\
    0.3 & 0.1586 & 0.1552 & 0.1556 & 0.1592 & 0.1642 & 0.1416 & 0.1398 & 0.1354 & 0.1292 & 0.1390 \\
    0.4 & 0.2084 & 0.2204 & 0.2112 & 0.2164 & 0.2162 & 0.1934 & 0.1966 & 0.1894 & 0.1828 & 0.2026 \\
    0.5 & 0.2774 & 0.2874 & 0.2754 & 0.2880 & 0.2722 & 0.2628 & 0.2526 & 0.2518 & 0.2404 & 0.2564 \\
    0.6 & 0.3622 & 0.3714 & 0.3308 & 0.3582 & 0.3364 & 0.3322 & 0.3166 & 0.3146 & 0.3084 & 0.3132 \\
    0.7 & 0.4402 & 0.4490 & 0.4250 & 0.4408 & 0.4198 & 0.4080 & 0.3976 & 0.3992 & 0.3754 & 0.3936 \\
    0.8 & 0.5638 & 0.5538 & 0.5420 & 0.5408 & 0.5096 & 0.5110 & 0.5204 & 0.5068 & 0.5104 & 0.4966 \\
    0.9 & 0.7202 & 0.7146 & 0.6960 & 0.6790 & 0.6568 & 0.6568 & 0.6838 & 0.6454 & 0.6658 & 0.6468 \\
    \bottomrule
  \end{tabular}
  \caption{\textbf{Average recall} in dependence on the parameters $\sigma$ and $\alpha$ over $n=100$ repetitions.}
  \label{table:results_rec_1}
\end{table}

\begin{figure}[]
\begin{subfigure}{.5\textwidth}
  \includegraphics[width=1.02\textwidth]{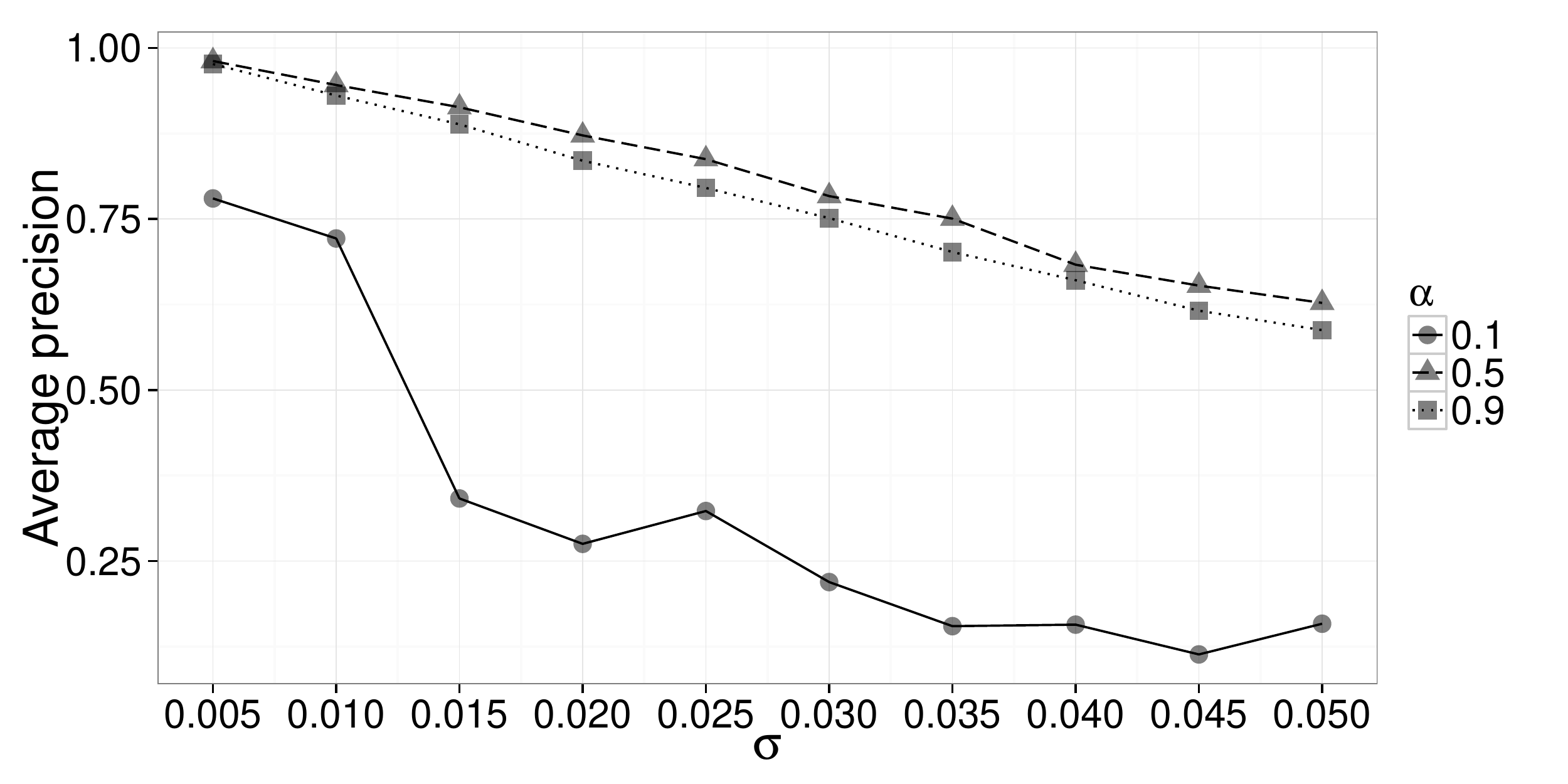}
  \caption*{}
\end{subfigure}
\begin{subfigure}{.5\textwidth}
  \includegraphics[width=1.02\textwidth]{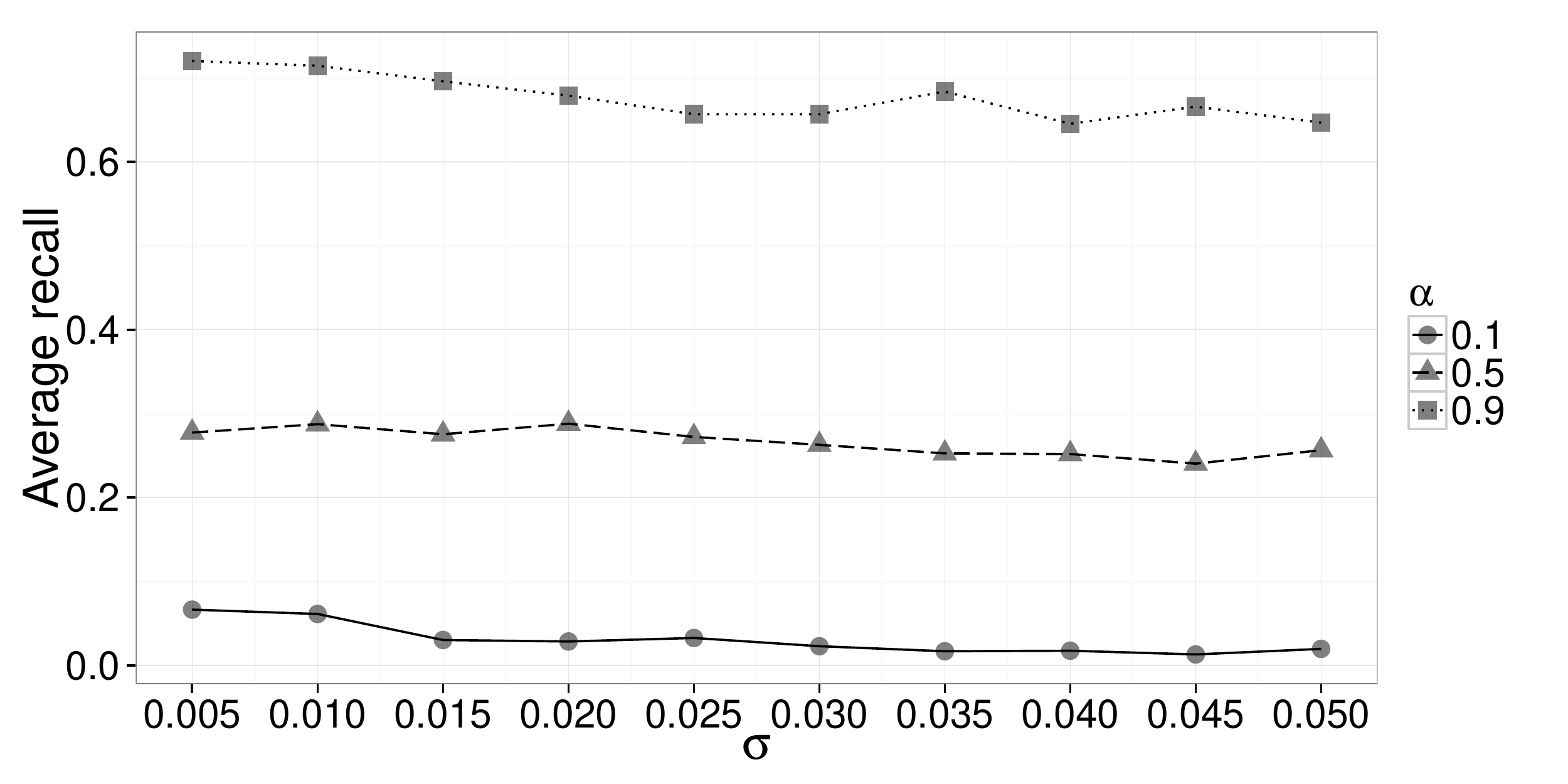}
  \caption*{}
\end{subfigure}%
\caption{Dependence of average precision (left) and recall (right) on the standard deviation $\sigma$ for different values of $\alpha$ (see tables~\ref{table:results_prec_1} and~\ref{table:results_rec_1}).}
\label{fig:results}
\end{figure}

\begin{figure}[]
\begin{subfigure}{.5\textwidth}
  \includegraphics[width=1.02\textwidth]{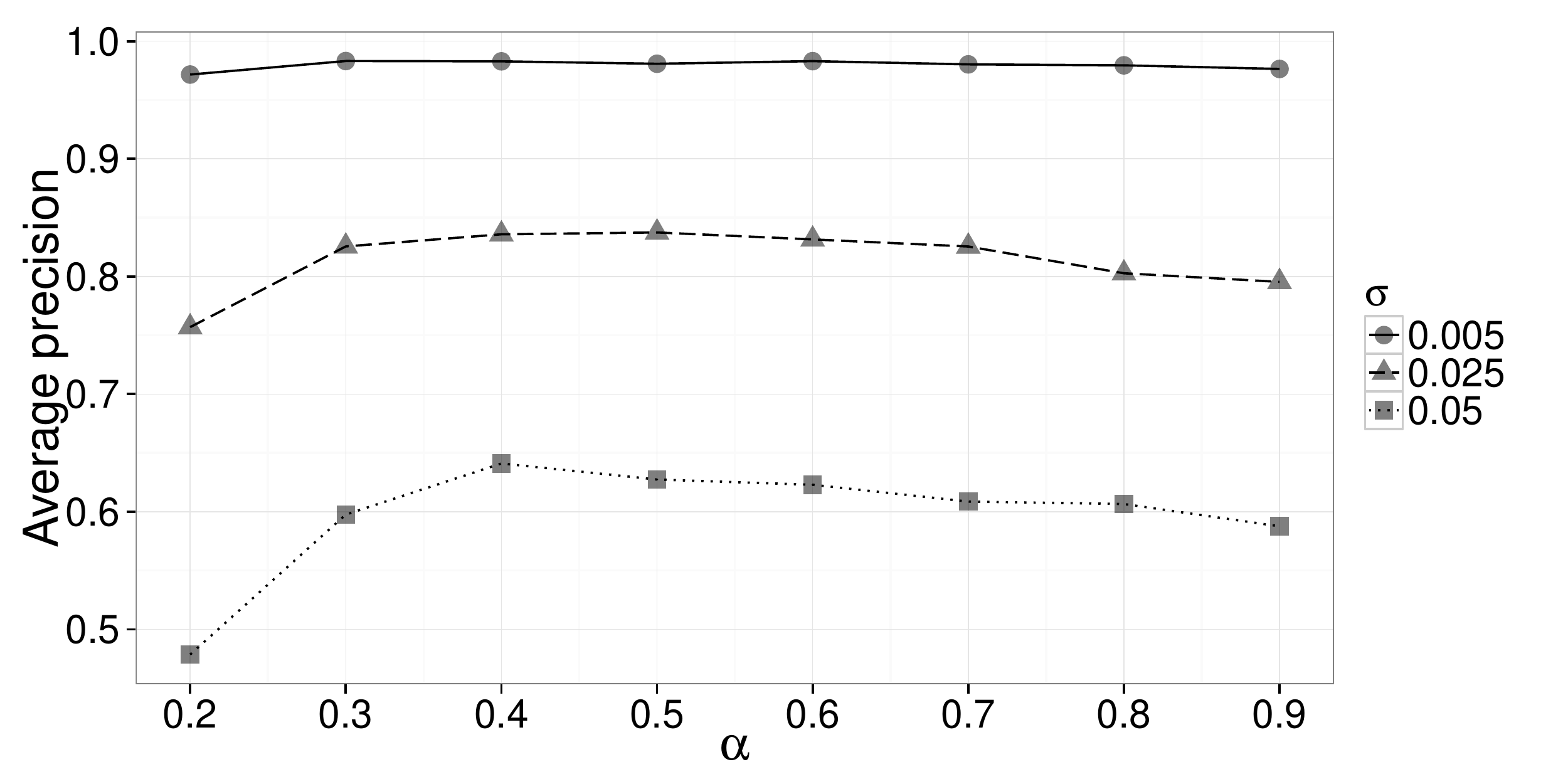}
  \caption*{}
\end{subfigure}
\begin{subfigure}{.5\textwidth}
  \includegraphics[width=1.02\textwidth]{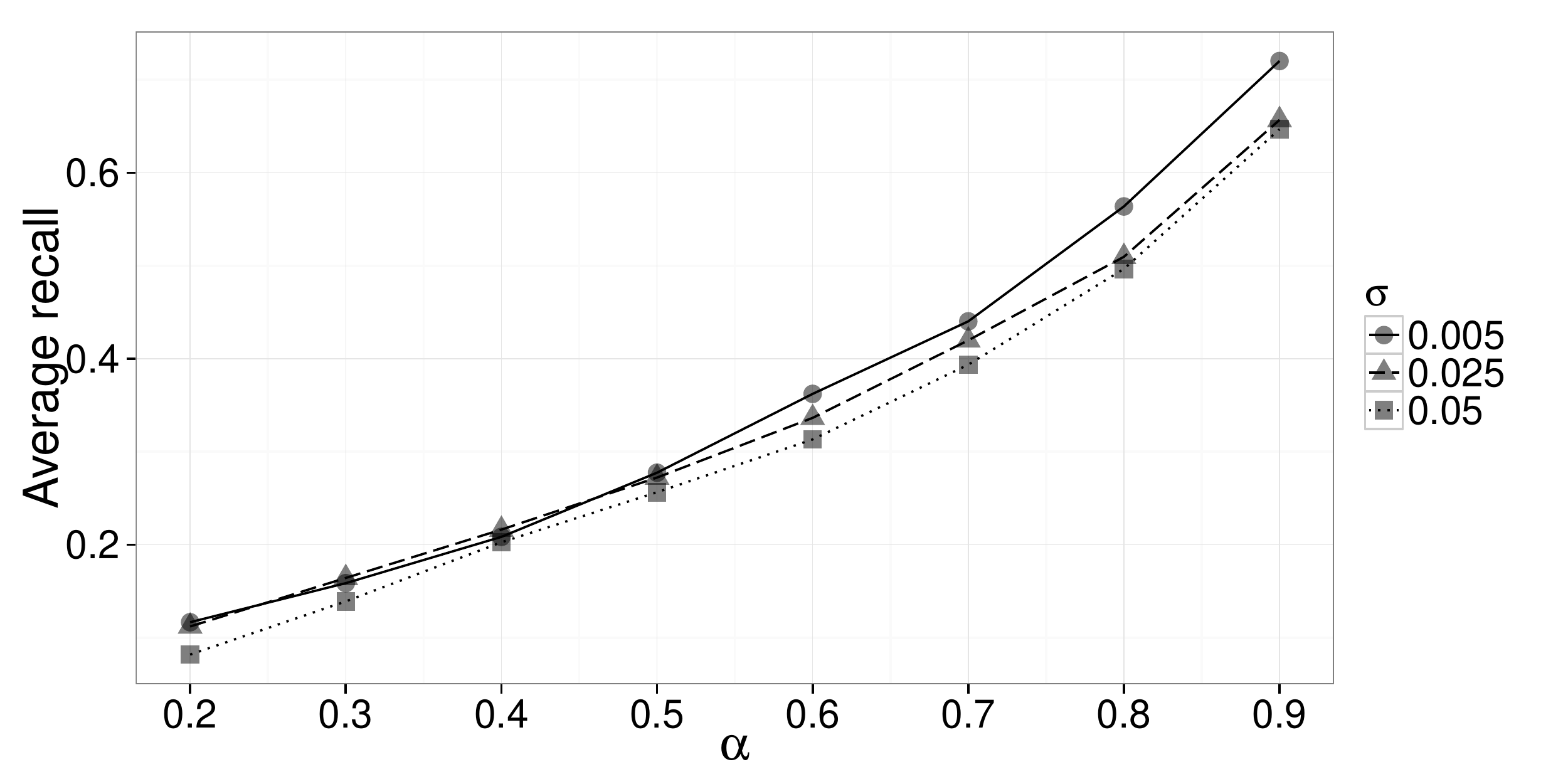}
  \caption*{}
\end{subfigure}%
\caption{Dependence of average precision (left) and recall (right) on the threshold parameter $\alpha$ for different values of $\sigma$ (see tables~\ref{table:results_prec_1} and~\ref{table:results_rec_1}).}
\label{fig:results_dep_alpha}
\end{figure}

\begin{figure}[]
\begin{subfigure}{.5\textwidth}
  \includegraphics[width=1.02\textwidth]{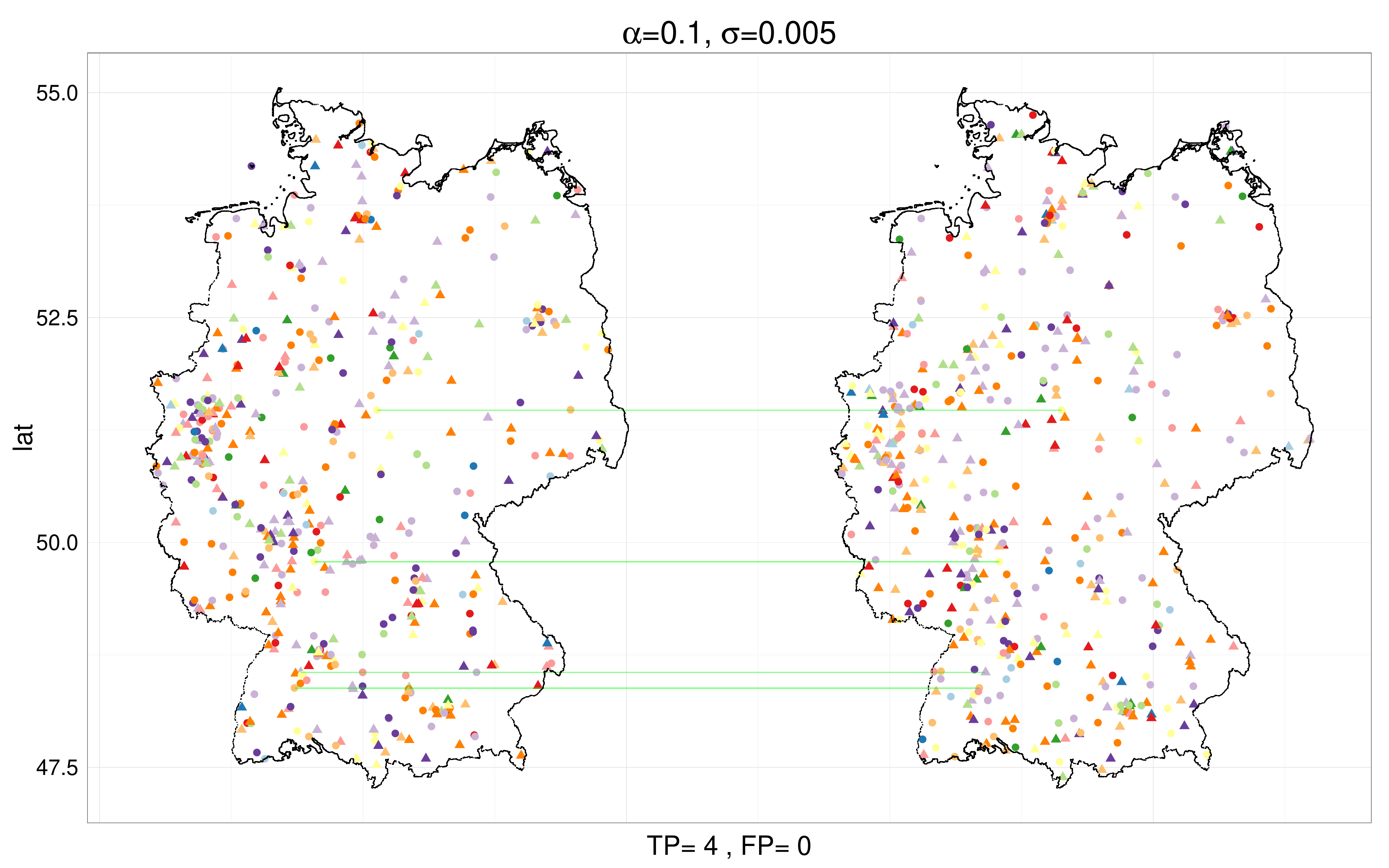}
  \caption*{}
\end{subfigure}
\begin{subfigure}{.5\textwidth}
  \includegraphics[width=1.02\textwidth]{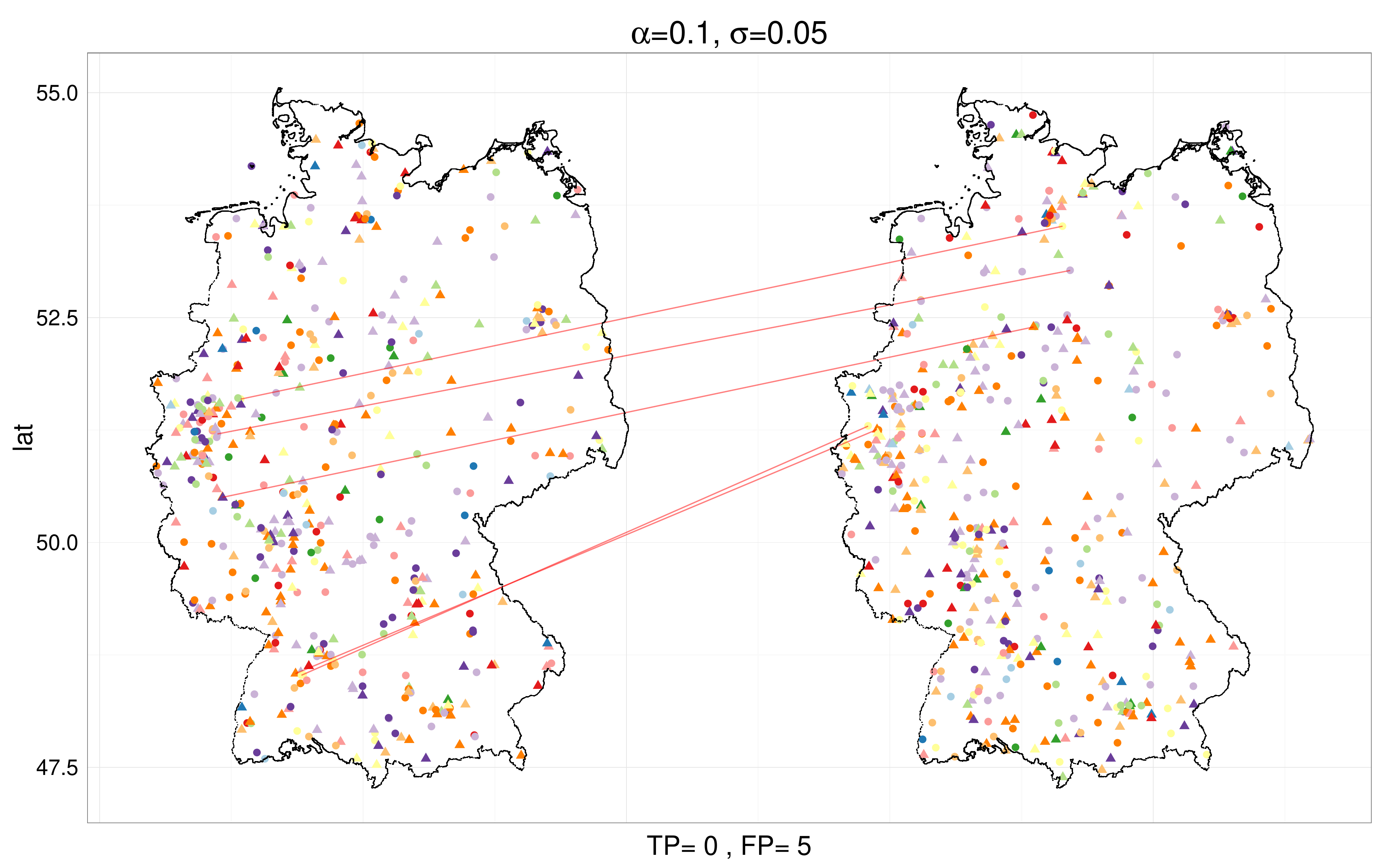}
  \caption*{}
\end{subfigure}%

\begin{subfigure}{.5\textwidth}
  \includegraphics[width=1.02\textwidth]{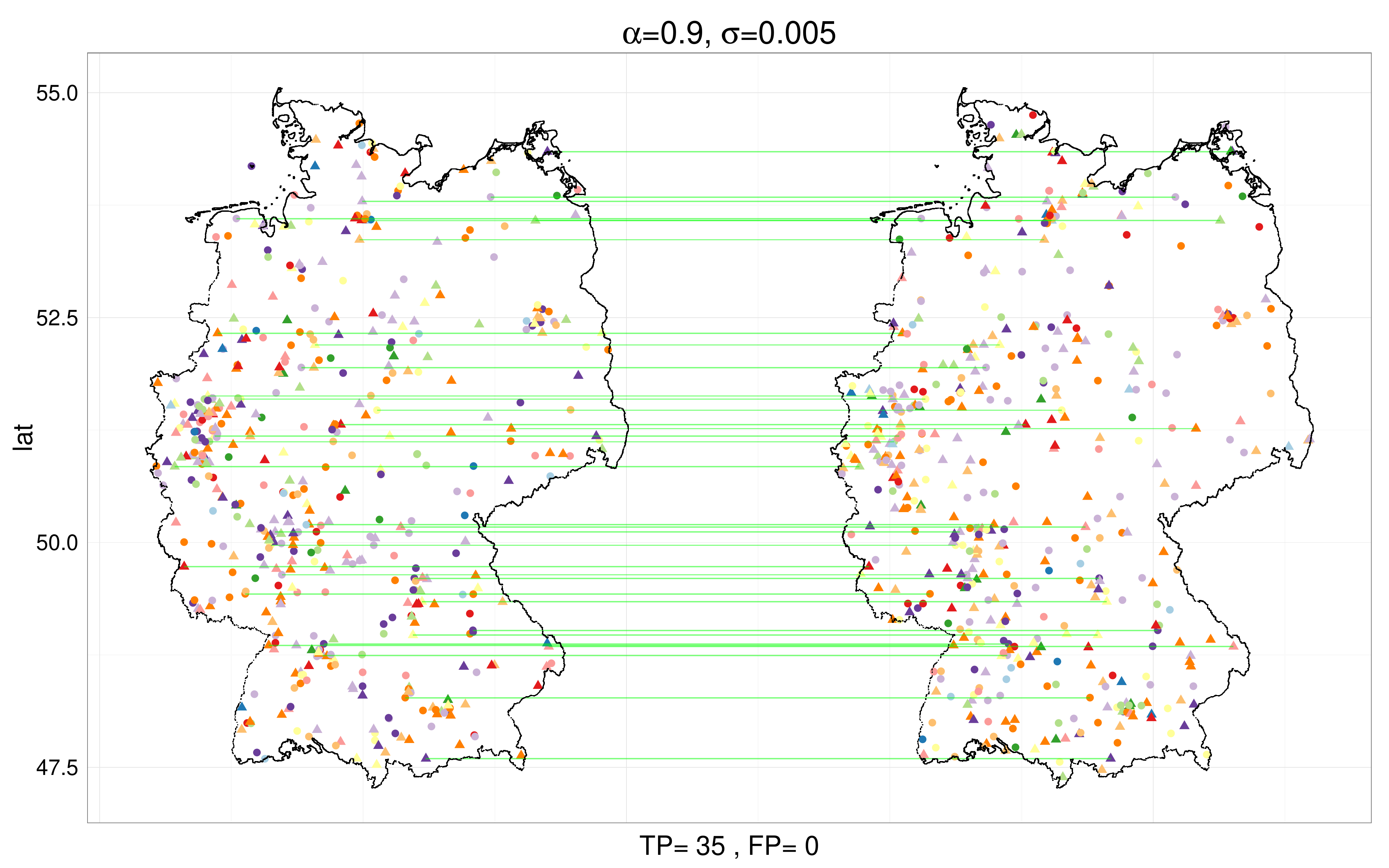}
  \caption*{}
\end{subfigure}
\begin{subfigure}{.5\textwidth}
  \includegraphics[width=1.02\textwidth]{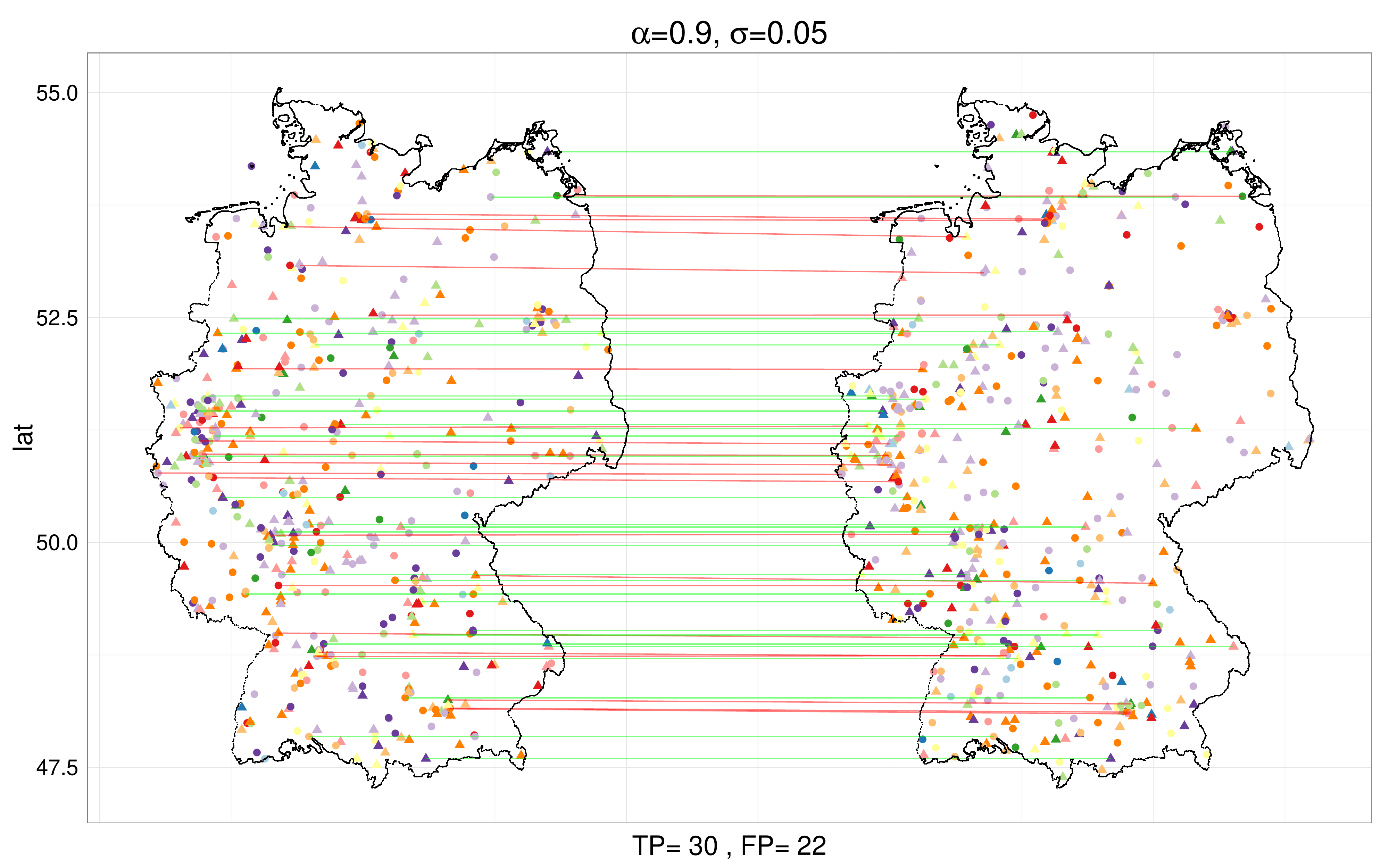}
  \caption*{}
\end{subfigure}%
\caption{Typical results of the graph theoretic linkage attack for different combinations of the noise parameter $\sigma$ (chosen by the data holder of the target file) and the threshold parameter $\alpha$ (chosen by the data snooper). Line segments between the target and identification file indicate matches made by the data snooper (the green lines indicate true, the red ones false positives). Larger values of $\alpha$ lead to more matches and a increase in recall, larger values of $\sigma$ to a decrease in precision.}
\label{fig:typical_results}
\end{figure}

\subsection*{Discussion}
The main effect of the threshold parameter $\alpha$ concerns the recall. The probability of detecting a common edge of the target and identification graph is approximately equal to $\alpha$. For this reason, higher values of $\alpha$ lead to a higher recall (see table~\ref{table:results_rec_1} and figures~\ref{fig:results} and~\ref{fig:results_dep_alpha}).

\smallskip

Simultaneously, the effect of $\alpha$ on the precision appears to be twofold:
On the one hand, for increasing $\alpha$ a larger portion of the overlap between the identification and target file can be successfully detected by the snooper, which makes false positives less likely (leading to a larger precision).
On the other hand, for too high values of $\alpha$ also the chance for non-common edges of the target and identification graph (but which coincide with respect to the vertex labels of their endpoints) to be classified as common edges increases leading to a slight decrease in precision. The latter phenomenon, together with the increase in recall for increasing $\alpha$ mentioned above, would reflect a trade-off between precision and recall, which is a well-known phenomenon in data linkage~\cite{christen_goiser}.
Thus, combining these two thoughts, for increasing $\alpha$ the precision should rapidly increase initially and then slightly decrease when $\alpha$ becomes too large. This expectation is confirmed by our experiments (see table~\ref{table:results_prec_1} and figures~\ref{fig:results} and~\ref{fig:results_dep_alpha}), although the decrease in precision when $\alpha$ becomes too large is not significant for the considered values of $\sigma$.

\smallskip

From the definition of $\approx$ (see the paragraph \textit{Fine-tuning of the attack} above), it is supposed that the recall does not change significantly in dependence on $\sigma$ because the probability of correctly detecting an edge should be nearly $\alpha$ (which is independent of $\sigma$). This non-dependence is impressively confirmed by the performed simulations and illustrated in figures~\ref{fig:results} and \ref{fig:results_dep_alpha}. However, $\sigma$ strongly influences the precision (for larger values of $\sigma$ the precision evidently decreases): The data snooper has to accept false positives (resulting in less precision) if she/he wants to achieve a certain predetermined recall.

\smallskip

Altogether, the simulations show that, in principle, a sufficient level of anonymity can be achieved by the addition of random noise to the input coordinates before computing the distance matrix.
However, this anonymity is not free, which is illustrated by means of the risk-utility (R-U) confidentiality map in figure~\ref{fig:RU}, where the risk of identity disclosure (measured by the average precision of the linkage attack; see also the discussion below) is plotted against the utility (measured as the reciprocal of the sample variance of the distance deviation $d-d'$ for the current value of $\sigma$ as recorded in table~\ref{table:distance_deviation}).
In the datasets considered in the simulation study $\sigma$ would have to be chosen large enough to guarantee at least some degree of anonymity, that useful analyses based on the distances would become difficult. For this reason, the development of distance modification techniques that guarantee a certain degree of anonymity, and make it possible to also conduct useful analyses on the anonymised data, will be an important aspect of future research.

\begin{figure}[]
\centering
\includegraphics[width=0.99\textwidth]{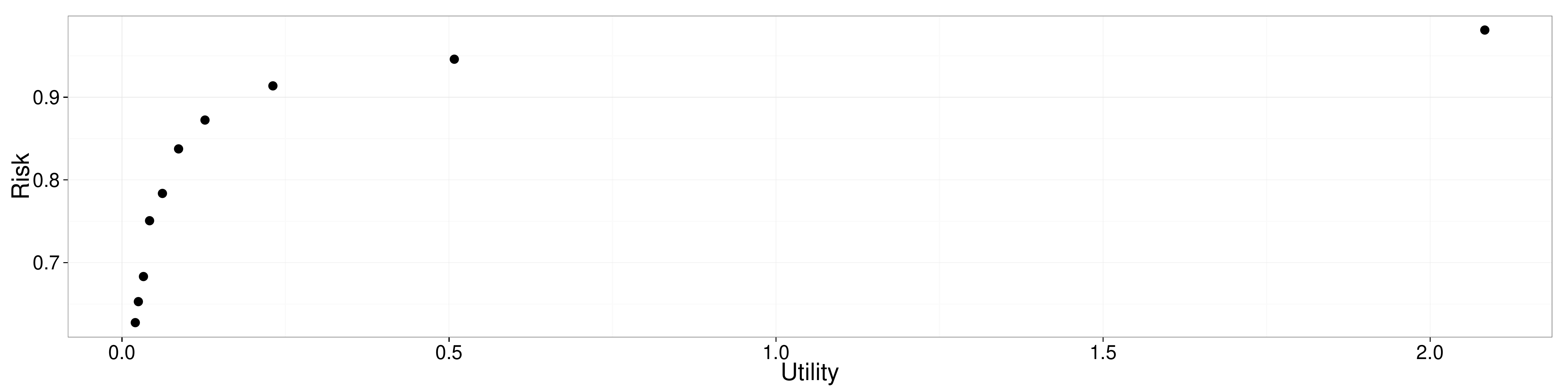}
\caption{R-U confidentiality map: The risk is measured by the average precision, whereas the utility is measured by the reciprocal of the sample variance in table~\ref{table:distance_deviation}. The threshold parameter $\alpha$ was chosen equal to $0.5$.}
\label{fig:RU}
\end{figure}

\smallskip

Note that in our specific example, the snooper would primarily attempt to achieve a high precision: In the case of geographic distances, a point is uniquely determined by the exact distances to three other points. If the snooper could deanonymise at least three entities successfully, exploiting this fact would be a good starting point to identify even more individuals. For arbitrary metric spaces, such a relationship does not hold in general, albeit the successful deanonymisation of some entities would also alleviate a snooper's work in this more general case.

\smallskip

Obviously, for distance modification techniques other than perturbation of the input coordinates, a snooper will have to modify the graph theoretical linkage attack, especially the definition of $\approx$. However, due to Kerckhoffs' principle, it has to be assumed that the snooper at least knows the distance modification technique used by the holder of the target file and exploits this knowledge in the precise construction of the attack. For instance, if noise is not added to the input coordinates before computing the distance matrix but rather to the distance matrix itself (a technique discussed in~\cite{kenthapadi}), the attack has to be slightly adapted. In this case, when defining the relation $\approx$ the quantiles of the noise distribution can be used directly, thereby making the empirical study on distance deviations originating from perturbation of the input coordinates unnecessary. Moreover, in this specific case it might be reasonable to further modify the attack by relaxing the (relatively strong) notion of a maximum clique to the less restrictive notion of a maximum quasi-clique, a relaxation which has been successfully applied in~\cite{fober} for the purpose of protein classification. In a similar way, our attack can be adapted to many other anonymisation techniques and thus provides a useful and flexible tool for the analysis of methods for distance-preserving anonymisation.

\section{Conclusion}\label{sec:conclusion}
In this article, we have introduced a novel graph theoretic linkage attack on microdata with additionally published (approximate) inter-record distances. In the special case of spatial distances, we have demonstrated -- on the basis of our test data -- that the release of distances increases the risk of identity disclosure unreasonably even if geographical coordinates have been perturbed by random Gaussian noise before the distances are calculated. Furthermore, we showed that augmenting the standard deviation of the added random noise will gradually lead to a sufficient level of anonymity, but also make the perturbed distances useless for further analysis.
Therefore, the development and analysis of anonymisation techniques for microdata in a metric space that guarantee a certain degree of anonymity but distort the distances as little as possible (particularly with regard to the applicability of data mining techniques) will be an important aspect of future research.

\section*{Acknowledgements}

This research has been supported by a grant from the German Research Foundation (DFG) to Rainer Schnell. I am grateful to Rainer Schnell for valuable suggestions that helped to improve the presentation of this paper.

\newpage

\newpage

\appendix

\section{Example dataset: European poets}\label{sec:example_data}
\begin{table}[ht]
\centering
{\normalsize
\begin{tabular}{rlrll}
  \hline
 & name & yob & language & loc \\ 
  \hline
1 & Giovanni Boccaccio & 1313 & Italian & Firenze \\ 
  2 & Miguel de Cervantes & 1547 & Spanish & Alcala de Henares \\ 
  3 & Johann Wolfgang Goethe & 1749 & German & Frankfurt am Main \\ 
  4 & Moliere & 1622 & French & Paris \\ 
  5 & Dante Alighieri & 1265 & Italian & Firenze \\ 
  6 & Friedrich Schiller & 1759 & German & Marbach am Neckar \\ 
  7 & Jean-Baptiste Racine & 1637 & French & La Ferte-Milon \\ 
  8 & William Shakespeare & 1564 & English & Stratford-upon-Avon \\ 
  9 & Simone de Beauvoir & 1908 & French & Paris \\ 
  10 & Jean-Paul Sartre & 1905 & French & Paris \\ 
   \hline
\end{tabular}
}
\caption{Microdata containing information about famous European poets. The attribute \texttt{yob} contains the year of birth, and \texttt{loc} the birth location of the poets.}\label{table:poets}
\end{table}

\begin{table}[ht]
\centering
{\normalsize
\begin{tabular}{rrl}
  \hline
 & cob & language \\ 
  \hline
1 &  14 & Italian \\ 
  2 &  16 & Spanish \\ 
  3 &  18 & German \\ 
  4 &  17 & French \\ 
  5 &  13 & Italian \\ 
  6 &  18 & German \\ 
  7 &  17 & French \\ 
  8 &  16 & English \\ 
  9 &  20 & French \\ 
  10 &  20 & French \\ 
   \hline
\end{tabular}
}
\caption{The anonymised version of table \ref{table:poets} is obtained by removing the direct identifier \texttt{name}, generalising the year of birth (\texttt{yob}) to century of birth (\texttt{cob}) and removing the birth location (\texttt{loc}).}\label{table:poets_anonymized}
\end{table}

The distances between birth locations \texttt{loc} are stored in the distance matrix $D_1$:
$$D_1=\begin{pmatrix}{}
  0 & 1261 & 729 & 886 & 0 & 593 & 864 & 1341 & 886 & 886 \\ 
  1261 & 0 & 1424 & 1034 & 1261 & 1369 & 1093 & 1307 & 1034 & 1034 \\ 
  729 & 1424 & 0 & 479 & 729 & 137 & 414 & 762 & 479 & 479 \\ 
  886 & 1034 & 479 & 0 & 886 & 507 & 67 & 469 & 0 & 0 \\ 
  0 & 1261 & 729 & 886 & 0 & 593 & 864 & 1341 & 886 & 886 \\ 
  593 & 1369 & 137 & 507 & 593 & 0 & 449 & 856 & 507 & 507 \\ 
  864 & 1093 & 414 & 67 & 864 & 449 & 0 & 478 & 67 & 67 \\ 
  1341 & 1307 & 762 & 469 & 1341 & 856 & 478 & 0 & 469 & 469 \\ 
  886 & 1034 & 479 & 0 & 886 & 507 & 67 & 469 & 0 & 0 \\ 
  886 & 1034 & 479 & 0 & 886 & 507 & 67 & 469 & 0 & 0 \\ 
  \end{pmatrix}$$

\newpage

\begin{table}[h]
\centering
{\normalsize
\begin{tabular}{rlrll}
  \hline
 & name & cob & language & loc \\ 
  \hline
1 & Giovanni Boccaccio &  14 & Italian & Firenze \\ 
  2 & Miguel de Cervantes &  16 & Spanish & Alcala de Henares \\ 
  3 & Johann Wolfgang Goethe &  18 & German & Frankfurt am Main \\ 
  4 & Moliere &  17 & French & Paris \\ 
  5 & James Joyce &  19 & English & Dublin \\ 
  6 & Heinrich Heine &  18 & German & Duesseldorf \\ 
  7 & Pierre Corneille &  17 & French & Rouen \\ 
  8 & Publius Ovidius Naso &  -1 & Latin & Sulmona \\ 
  9 & Lope de Vega &  16 & Spanish & Madrid \\ 
  10 & August Strindberg &  19 & Swedish & Stockholm \\ 
   \hline
\end{tabular}
}
\caption{Identification microdata table used by the data snooper in example \ref{example:poets}.}\label{table:poets_intruder}
\end{table}

Geocoding of the locations from table~\ref{table:poets_intruder} using the R package \texttt{ggmap} and calculation of the mutual distances via the command \texttt{spDists} from the package \texttt{sp} yields the distance matrix $D_2$:

$$D_2=\begin{pmatrix}{}
  0 & 1260 & 731 & 887 & 1666 & 894 & 999 & 291 & 1290 & 1791 \\ 
  1260 & 0 & 1423 & 1033 & 1446 & 1427 & 1055 & 1457 & 30 & 2574 \\ 
  731 & 1423 & 0 & 479 & 1091 & 183 & 551 & 983 & 1447 & 1188 \\ 
  887 & 1033 & 479 & 0 & 782 & 412 & 112 & 1177 & 1052 & 1546 \\ 
  1666 & 1446 & 1091 & 782 & 0 & 919 & 671 & 1956 & 1450 & 1633 \\ 
  894 & 1427 & 183 & 412 & 919 & 0 & 450 & 1156 & 1448 & 1149 \\ 
  999 & 1055 & 551 & 112 & 671 & 450 & 0 & 1290 & 1071 & 1548 \\ 
  291 & 1457 & 983 & 1177 & 1956 & 1156 & 1290 & 0 & 1487 & 1942 \\ 
  1290 & 30 & 1447 & 1052 & 1450 & 1448 & 1071 & 1487 & 0 & 2595 \\ 
  1791 & 2574 & 1188 & 1546 & 1633 & 1149 & 1548 & 1942 & 2595 & 0 \\ 
  \end{pmatrix}$$

\end{document}